\theoremstyle{plain}
\theoremstyle{plain}
\theoremstyle{plain}
\newtheorem{lem}{\protect\lemmaname}
\theoremstyle{plain}
\newtheorem{thm}{\protect\theoremname}
\theoremstyle{plain}
\newtheorem{cor}{\protect\corollaryname}  
\theoremstyle{definition}
\theoremstyle{definition}
\theoremstyle{definition}
\providecommand{\claimname}{Claim}
\providecommand{\lemmaname}{Lemma}
\providecommand{\propositionname}{Proposition}
\providecommand{\theoremname}{Theorem}
\providecommand{\corollaryname}{Corollary} 
\providecommand{\definitionname}{Definition}
\providecommand{\assumptionname}{Assumption}
\providecommand{\remarkname}{Remark}
\newcommand{\overbar}[1]{\mkern 1.25mu\overline{\mkern-1.25mu#1\mkern-0.25mu}\mkern 0.25mu}
\newcommand{\openone}{\mathds{1}}
\newcommand{\cmean}{c_{\mathrm{mean}}}
\newcommand{\cvar}{c_{\mathrm{var}}}
\newcommand{\cmax}{c_{\mathrm{max}}}
\newcommand{\dneg}{d_{\mathrm{neg}}}
\newcommand{\dpos}{d_{\mathrm{pos}}}
\newcommand{\Nneg}{N_{\mathrm{neg}}}
\newcommand{\Npos}{N_{\mathrm{pos}}}
\newcommand{\Nerr}{N_{\mathrm{err}}}
\newcommand{\alphaneg}{\alpha_{\mathrm{neg}}}
\newcommand{\alphapos}{\alpha_{\mathrm{pos}}}
\newcommand{\pej}{P_{\mathrm{e},j}}
\newcommand{\nusymm}{\nu_{\mathrm{symm}}}
\newcommand{\Bernoulli}{\mathrm{Bernoulli}}
\newcommand{\pe}{P_{\mathrm{e}}}
\newcommand{\xvbar}{\overbar{\mathbf{x}}}
\newcommand{\xv}{\mathbf{x}}
\newcommand{\Xvbar}{\overbar{\mathbf{X}}}
\newcommand{\Xv}{\mathbf{X}}
\newcommand{\yv}{\mathbf{y}}
\newcommand{\Yv}{\mathbf{Y}}
\newcommand{\Ac}{\mathcal{A}}
\newcommand{\Cc}{\mathcal{C}}
\newcommand{\Dc}{\mathcal{D}}
\newcommand{\EE}{\mathbb{E}}
\newcommand{\PP}{\mathbb{P}}
\newcommand{\var}{\mathrm{Var}}
\providecommand{\algorithmname}{Algorithm}
\newcommand{\manuallabel}[2]{\def\@currentlabel{#2}\label{#1}}
\begin{document} 

\title{Near-Optimal Noisy Group Testing \\ via Separate Decoding of Items}
\author{Jonathan Scarlett and Volkan Cevher}
\maketitle

\begin{abstract}
    The group testing problem consists of determining a small set of defective items from a larger set of items based on a number of tests, and is relevant in applications such as medical testing, communication protocols, pattern matching, and more.  In this paper, we revisit an efficient algorithm for noisy group testing in which each item is decoded separately (Malyutov and Mateev, 1980), and develop novel performance guarantees via an information-theoretic framework for general noise models.   For the special cases of no noise and symmetric noise, we find that the asymptotic number of tests required for vanishing error probability is within a factor $\log 2 \approx 0.7$ of the information-theoretic optimum at low sparsity levels, and that with a small fraction of allowed incorrectly decoded items, this guarantee extends to all sublinear sparsity levels.  In addition, we provide a converse bound showing that if one tries to move slightly beyond our low-sparsity achievability threshold using separate decoding of items and i.i.d.~randomized testing, the average number of items decoded incorrectly approaches that of a trivial decoder.
\end{abstract}

\long\def\symbolfootnote[#1]#2{\begingroup\def\thefootnote{\fnsymbol{footnote}}\footnote[#1]{#2}\endgroup}

\symbolfootnote[0]{ Jonathan Scarlett is with the Department of Computer Science and Department of Mathematics, National University of Singapore (e-mail: scarlett@comp.nus.edu.sg).

Volkan Cevher is with the Laboratory for Information and Inference Systems (LIONS), \'Ecole Polytechnique F\'ed\'erale de Lausanne (EPFL) (e-mail: volkan.cevher@epfl.ch).

This work will be presented in part at the IEEE International Symposium on Information Theory (ISIT), 2018.}
\vspace*{-0.5cm}

%
% INTRODUCTION
%
\section{Introduction} \label{sec:intro}

The group testing problem consists of determining a small subset $S$ of ``defective'' items within a larger set of items $\{1,\dotsc,p\}$ based on a number of tests. This problem has a history in medical testing \cite{Dor43}, and has regained significant attention with following applications in areas such as communication protocols \cite{Ant11}, pattern matching \cite{Cli10}, and database systems \cite{Cor05}, and new connections with compressive sensing \cite{Gil08,Gil07}. In the noiseless setting, each test takes the form
\begin{equation}
    Y = \bigvee_{j \in S} X_j, \label{eq:gt_noiseless_model}
\end{equation}
where the test vector $X = (X_1,\dotsc,X_p) \in \{0,1\}^p$ indicates which items are included in the test, and $Y$ is the resulting observation.  That is, the output indicates whether at least one defective item was included in the test.   One wishes to minimize the total number of tests $n$ while still ensuring the reliable recovery of $S$.  We focus on the non-adaptive setting, in which all tests must be designed in advance. The corresponding test vectors $X^{(1)},\dotsc,X^{(n)}$ are represented by the matrix $\Xv \in \{0,1\}^{n \times p}$ with $i$-th row $X^{(i)}$.
 
Following both classical works \cite{Mal78,Mal80,Dya81,Dya83} and recent advances \cite{Ati12,Ald14a,Sca15b}, the information-theoretic performance limits of group testing have become increasingly well-understood, and several practical near-optimal algorithms for the {\em noiseless} setting have been developed.  In contrast, practical algorithms for {\em noisy} settings have generally remained less well-understood, with the best known theoretical guarantees usually being far from the information-theoretic limits (though sometimes matching in scaling laws) \cite{Cha11,Cai17,Lee15a}.

A notable exception to these limitations is the technique of Malyutov and Mateev \cite{Mal80} based on {\em separate decoding of items},\footnote{This is referred to as {\em separate testing of inputs} in more recent works \cite{Mal13,Mal12}, but there the word ``testing'' refers to a hypothesis test performed at the decoder, as opposed to testing the sense of designing $\Xv$.} in which each given item $j \in \{1,\dotsc,p\}$ is decoded based only on the $j$-th column $\Xv_j$ of $\Xv$, along with $\Yv$ (see Section \ref{sec:separate}).  This approach is computationally efficient, and was also proved to come with strong theoretical guarantees {\em in the case that $k := |S| = O(1)$} \cite{Mal80,Mal13} (see Section \ref{sec:related}).

In this paper, we develop a theoretical framework for understanding separate decoding of items, and move beyond the work of \cite{Mal80} in several important directions: (i) We consider the general case of $k = o(p)$, thus handling much more general scenarios corresponding to ``denser'' settings, and leading to non-trivial challenges in the theoretical analysis; (ii) We consider not only exact recovery, but also partial recovery, often leading to much milder requirements on the number of tests; (iii) We provide a novel converse bound revealing that under separate decoding of items, our achievability bounds cannot be improved in several cases of interest.

Before discussing the previous work and our contributions in more detail, we formally state the setup.

\subsection{Problem Setup} \label{sec:setup}
 
 We let the defective set $S$ be uniform on the ${p \choose k}$ subsets of $\{1,\dotsc,p\}$ of cardinality $k$.  For convenience, we will sometimes equivalently refer to a vector $\beta \in \{0,1\}^p$ whose $j$-th entry indicates whether or not item $j$ is defective:
 \begin{equation}
     \beta_j = \openone\{ j \in S \}.
 \end{equation}
 We consider i.i.d.~Bernoulli testing, where each item is placed in a given test independently with probability $\frac{\nu}{k}$ for some constant $\nu>0$.  The vector of $n$ observations is denoted by $\Yv \in \{0,1\}^{n}$, and the corresponding measurement matrix (each row of which contains a single measurement vector) is denoted by $\Xv \in \{0,1\}^{n \times p}$.  Denoting the $i$-th entry of $\Yv$ by $Y^{(i)}$ and the $i$-th row of $\Xv$ by $X^{(i)} = (X_1^{(i)},\dotsc,X_p^{(i)})$, the measurement model is given by
 \begin{equation}
     (Y^{(i)} | X^{(i)}) \sim P_{Y|N(S,X^{(i)})}, \label{eq:obs_gen}
 \end{equation}
 where $N(S,X^{(i)}) = \sum_{j=1}^p \openone\{ j \in S \,\cap\, X^{(i)}_j = 1\}$ denotes the number of defective items in the test.  That is, we consider arbitrary noise distributions $P_{Y|N}$ for which $Y^{(i)}$ depends on $X^{(i)}$ only through $N(S,X^{(i)})$, with conditional independence among the tests $i=1,\dotsc,n$.  For each item $j=1,\dotsc,p$, the $j$-th column of $\Xv$ is written as $\Xv_j \in \{0,1\}^n$.
 
 While most of our results will be written in terms of general noise models of the form \eqref{eq:obs_gen}, we also pay particular attention to two specific models: The noiseless model in \eqref{eq:gt_noiseless_model}, and the {\em symmetric noise model} with parameter $\rho > 0$:
 \begin{equation}
     Y = \Big(\bigvee_{j \in S} X_j \Big) \oplus Z, \label{eq:gt_symm_model}
 \end{equation}
 where $Z \sim \Bernoulli(\rho)$, and $\oplus$ denotes modulo-2 addition.
 
In the general case (i.e., not necessarily using separate decoding of items), given $\Xv$ and $\Yv$, a \emph{decoder} forms an estimate $\hat{S}$ of $S$, or equivalently, an estimate $\hat{\beta}$ of $\beta$.  We consider two related performance measures.  In the case of \emph{exact} recovery, the error probability is given by 
 \begin{equation}
      \pe := \PP[\hat{S} \ne S], \label{eq:pe}
 \end{equation}
 and is taken over the realizations of $S$, $\Xv$, and $\Yv$ (the decoder is assumed to be deterministic).  In addition, we consider a less stringent performance criterion in which we allow for up to $\dpos \in \{0,\dotsc,p-k-1\}$ false positives and $\dneg \in \{0,\dotsc,k-1\}$ false negatives, yielding an error probability of
 \begin{equation}
     \pe(\dpos,\dneg) := \PP\big[|\hat{S} \backslash S| > \dpos \cup |S \backslash \hat{S}| > \dneg\big]. \label{eq:pe_partial}
 \end{equation}
 In some cases (particularly for the converse) it will be convenient to consider yet another criterion in which we only seek to bound the {\em average number of incorrectly-decoded items} $\Nerr$:
 \begin{equation}
     \EE[\Nerr] = \sum_{j=1}^p \PP[\hat{\beta}_j \ne \beta_j]. \label{eq:Nerr}
 \end{equation}
 
 \subsection{Separate Decoding of Items} \label{sec:separate}

We use the terminology {\em separate decoding of items} to mean any decoding scheme in which $\hat{\beta}_j$ is only a function of $\Xv_j$ and $\Yv$, i.e.,
\begin{equation}
    \hat{\beta}_j = \phi_j(\Xv_j,\Yv), \quad j=1,\dotsc,p \label{eq:separate_gen}
\end{equation}
for some functions $\{\phi_j\}_{j=1}^p$.  All of our achievability results will choose $\phi_j$ not depending on $j$; more specifically, following \cite{Mal80}, each decoder is of the following form for some $\gamma > 0$:
\begin{equation}
    \phi_j(\Xv_j,\Yv) = \openone\bigg\{ \sum_{i=1}^n \log\frac{P_{Y|X_j,\beta_j}(Y^{(i)}|X_j^{(i)},1)}{P_{Y}(Y^{(i)})} > \gamma \bigg\}, \label{eq:separate_dec}
\end{equation}
where $P_Y$ is the unconditional distribution of a given observation, and $P_{Y|X_j,\beta_j}(\cdot|\cdot,1)$ is the conditional distribution given $\beta_j = 1$ and the value of $X_j$.  This can be interpreted as the Neyman-Pearson test for binary hypothesis testing with hypotheses $H_0 \,:\, \beta_j = 0$ and $H_1 \,:\, \beta_j = 1$.

The computational complexity of \eqref{eq:separate_dec} is $O(n)$ for each $j=1,\dotsc,p$, for a total of $O(np)$.  This matches the runtime of typical group testing algorithms \cite{Ald14a}, though it is slower than recent {\em sublinear-time} algorithms \cite{Cai17,Lee15a} (see also \cite{Ngo11,Ind10} for earlier works on non-random noise models).  Moreover, considerable speedups are possible via distributed implementations, as shown in \cite{Mal15,GrosuThesis}. 

The rule \eqref{eq:separate_dec} requires knowledge of the Bernoulli test parameter $\nu $, the crossover probability $\rho$, and the number of defectives $k$.  The last of these poses the strongest assumption, though it is commonly made in the group testing literature.  We leave the study of universal variants (e.g., see \cite{Mal98}) for future work.

% Despite this, \eqref{eq:separate_dec} does support the following aspects that are increasingly relevant in high-dimensional inference and learning problems:
%\begin{itemize}
%    \item {\bf Distributed implementation.} Since the $j$-th decoder in \eqref{eq:separate_gen} only depends on $(\Xv_j,\Yv)$, the decoding can be trivially parallelized.  For instance, if $p$ machines can simultaneously access a common memory containing $\Xv$ and $\Yv$, then they can each access the associated $(\Xv_j,\Yv)$ and compute $\hat{\beta}_j$ in time $O(n)$.  A detailed description of a distributed implementation of separate decoding can be found in \cite{Mal15,GrosuThesis}, along with timings indicating that such an implementation can be scaled to very high dimensions.
%    \item {\bf Streaming setting.} In the case that the tests $(X^{(i)},Y^{(i)})$ arrive sequentially, one can simply keep a running total of the $p$ sums corresponding to \eqref{eq:separate_dec} (one for each $j$), and hence work with $O(p)$ storage rather than storing the entire matrix $\Xv$ of size $np$.  Note that this is not possible for general (non-additive) rules of the form \eqref{eq:separate_gen}.
%\end{itemize}

\subsection{Related Work} \label{sec:related}

Figure \ref{fig:rates} summarizes the main results known for the noiseless and symmetric noise models (both information-theoretic and practical), along with our novel contributions.  We proceed by outlining the relevant related work, and then describe our contributions in more detail and further discuss Figure \ref{fig:rates}.

\begin{figure*}
	\begin{centering}
	 	\includegraphics[width=0.48\textwidth]{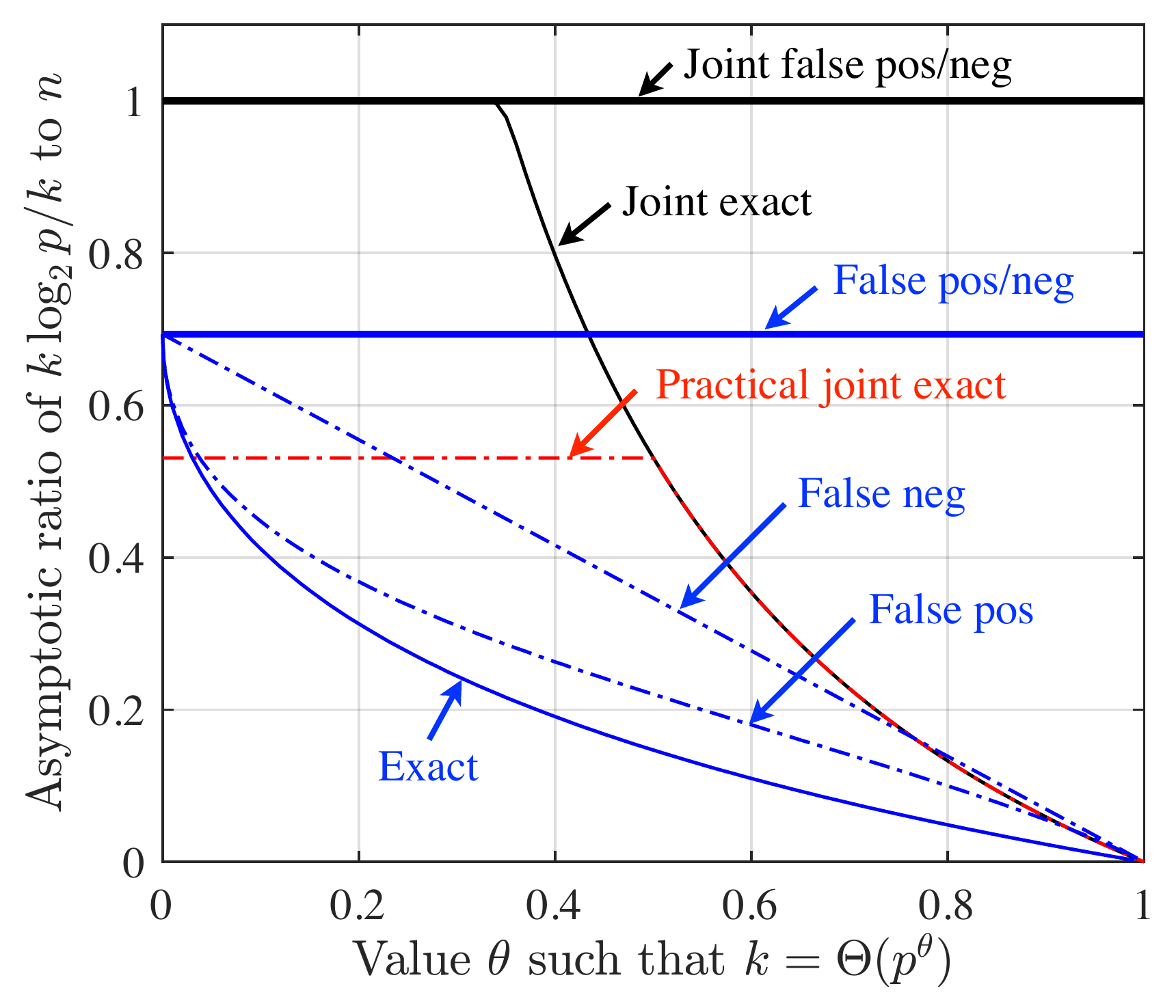} \quad
	 	\includegraphics[width=0.48\textwidth]{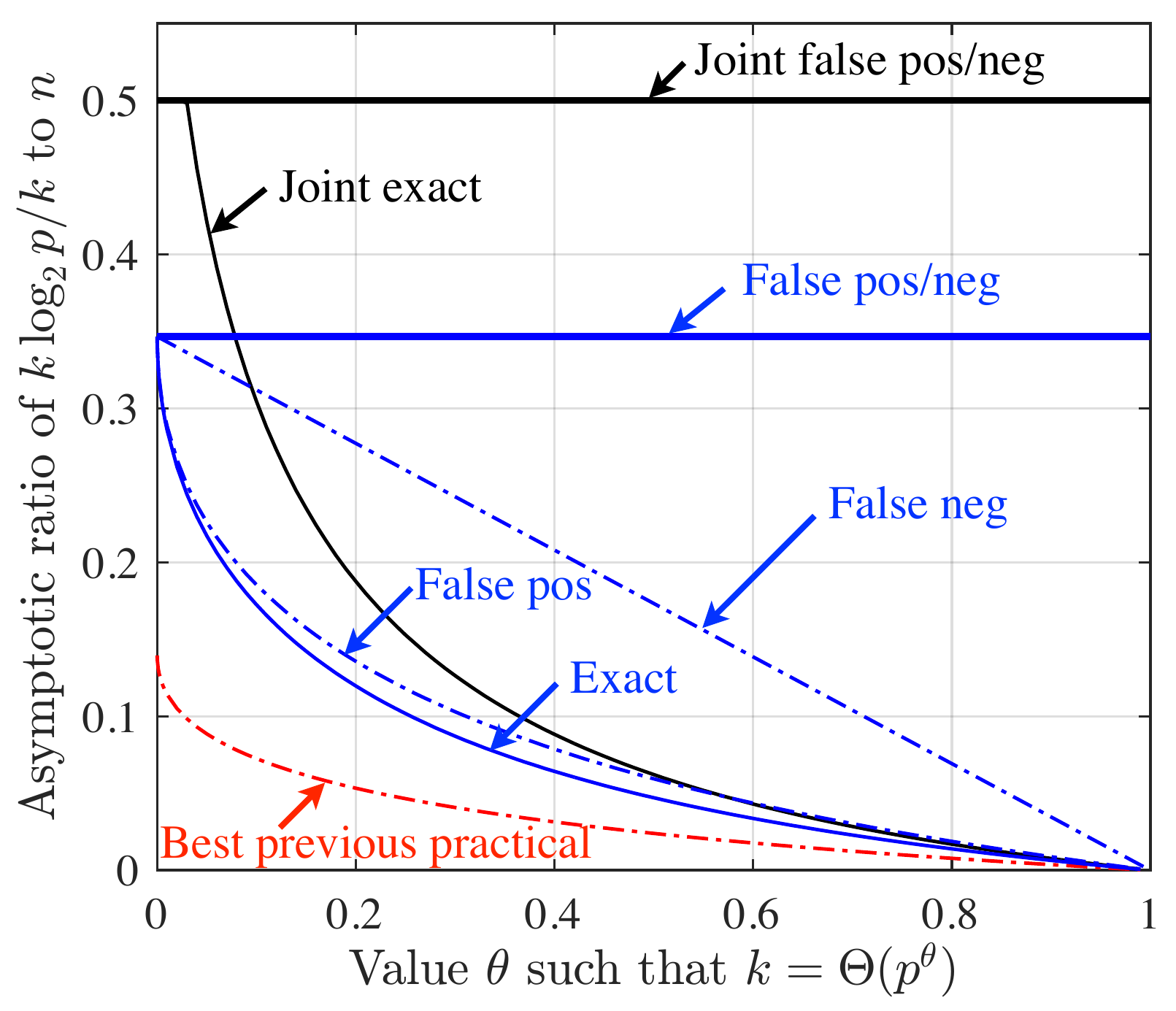}
        \par
     \end{centering}

    \caption{ Asymptotic thresholds on the number of tests required for vanishing error probability in the noiseless setting (Left), and the symmetric noise setting with $\rho = 0.11$ (Right). The number of defective items is $k = \Theta(p^{\theta})$ for some $\theta > 0$.  The vertical axis represents the constant $c(\theta)$ such that the number of tests is $\big(\frac{1}{c(\theta)}k\log_2\frac{p}{k}\big)(1+o(1))$.  The blue curves correspond to separate decoding of items, with exact recovery (Exact), $\Theta(k)$ false positives only (False pos), $\Theta(k)$ false negatives only (False neg), or both $\Theta(k)$ false positives and $\Theta(k)$ false negatives (False pos/neg).  The ``practical joint exact'' curve is DD \cite{Ald14a} or LP \cite{Mal12} in the noiseless case, and the ``best previous practical'' curve is NCOMP \cite{Cha11} in the noisy case. \label{fig:rates} }
\end{figure*}

{\bf Information-theoretic limits.} The information-theoretic limits of group testing have long been well-understood for $k = O(1)$ in the Russian literature \cite{Mal78,Mal80}, and have recently become increasingly well-understood for more general $k = o(p)$ \cite{Ati12,Ald14,Sca15b,Ald15}.  Here we highlight two results from \cite{Sca15,Sca15b,Ald15} that are particularly relevant to our work:
\begin{itemize}
    \item For the noiseless model \eqref{eq:gt_noiseless_model} with Bernoulli testing and $k = \Theta(p^{\theta})$ for some $\theta \in (0,1)$, the minimal number of tests $n^*$ ensuring high-probability exact recovery satisfies
    \begin{equation}
        n^* = \inf_{\nu>0}\max\bigg\{ \frac{ k \log{\frac{p}{k}} }{ H_2(e^{-\nu}) }, \frac{\frac{\theta}{1-\theta} k \log{\frac{p}{k}}}{ e^{-\nu}\nu }\bigg\} (1 + o(1)), \label{eq:it_noiseless}
    \end{equation}
    where $H_2(\cdot)$ is the binary entropy function.\footnote{Here and subsequently, all logarithms have base $e$, and all information measures are in units of nats.}  In particular, for $\theta \le \frac{1}{3}$, we have $n^* = \big(k\log_2\frac{p}{k}\big)(1+o(1))$, which is optimal even beyond Bernoulli testing.
    \item For the symmetric noise model \eqref{eq:gt_symm_model} with $\rho > 0$, we have {\em for sufficiently small $\theta$} that 
    \begin{equation}
        n^* = \frac{k\log\frac{p}{k}}{\log 2 - H_2(\rho)} (1 + o(1)). \label{eq:it_symm}
    \end{equation}
    Moreover, if we move to partial recovery with $\dpos = \dneg = \alpha^*k$ for arbitrarily small $\alpha^* > 0$, then the right-hand side of \eqref{eq:it_symm} is achievable for all $\theta \in (0,1)$ (including in the noiseless case $\rho = 0$).
\end{itemize}

{\bf Practical algorithms.} In the noiseless setting, numerous practical group testing algorithms have been proposed with various theoretical guarantees.  The best known bounds under Bernoulli testing were given for the {\em definite defectives} (DD) algorithm in \cite{Ald15}, in particular matching \eqref{eq:it_noiseless} for $\theta \ge \frac{1}{2}$.  Moreover, in \cite{Ald17}, it was shown that the same bound is attained by the linear programming (LP) relaxation techniques of \cite{Mal12}.  The DD algorithm was also shown to yield improved bounds under a non-Bernoulli random design in \cite{Joh16}, but in this paper we focus on Bernoulli testing.

In noisy settings, less is known.  For the symmetric noise model, an algorithm called {\em noisy combinatorial orthogonal matching pursuit} (NCOMP) \cite{Cha11} (also referred to as {\em noisy column matching} in \cite{Cha14}) was shown to achieve optimal $O(k \log p)$ scaling, but the constant factors in this result are quite suboptimal.  NCOMP is in fact also an algorithm with separate decoding of items, albeit different than that of \cite{Mal80}.  

Some heuristic algorithms have been proposed for noisy settings without theoretical guarantees, including belief propagation \cite{Sed10} and a noisy LP relaxation \cite{Mal12}.  A different LP relaxation was also given in \cite{Cha14} that only makes use of the negative tests; as a result, we found that it does not perform as well in practice.  On the other hand, it was shown to yield optimal scaling laws in the symmetric noise model, with the constant factors again left loose to simplify the analysis.

Another related algorithm is the {\em column-based algorithm} of \cite{Sha15a}, which separately computes the number of agreements and disagreements with $\Yv$ for each column $\Xv_j$ (though a final sorting step is also performed, so this is not a ``separate decoding'' algorithm according to our definition).  The main focus in \cite{Sha15a} is on the regime that $\dneg = 0$ and $\dpos = \Theta(p)$ (i.e., a very large number of false positives), which we do not consider in this paper.

Recently, algorithms based on {\em sparse graph codes} (with non-Bernoulli testing) have been proposed with various guarantees \cite{Cai17,Lee15a}.  For the noiseless non-adaptive case with exact recovery, however, the scaling laws are $O(k \log k \log p)$, thus failing to match the information-theoretic scaling $\Theta(k \log \frac{p}{k})$.  Nevertheless, for partial recovery, a $O(k \log p)$ guarantee was proved in \cite{Lee15a} (with loose constant factors).  While these algorithms are suboptimal, they have the notable advantage of running in {\em sublinear time}.

We briefly mention that a rather different line of works has considered group testing with {\em adversarial} errors \cite{Mac97,Ngo11,Che13a}.  This is a fundamentally different setting to that of random errors, and the corresponding test designs and algorithms are less relevant to the present paper.

{\bf Separate decoding of items.} The idea of separate decoding of items for sparse recovery problems (including group testing) was initiated by Malyutov and Mateev \cite{Mal80}, who showed that when $k = O(1)$ and the decoder \eqref{eq:separate_dec} is used with suitably-chosen $\gamma$, one can achieve exact recovery with vanishing error probability provided that
\begin{equation}
    n \ge \frac{\log p}{I_1} (1+o(1)), \label{eq:Maly_result}
\end{equation}
where the {\em single-item mutual information} $I_1$ is defined as follows, with implicit conditioning on item 1 being defective:
\begin{equation}
    I_1 := I(X_1;Y). \label{eq:I1_intro}
\end{equation}
As noted in \cite{Mal13}, in the noiseless setting with $\nu = \ln 2$ we have $I_1 = \frac{(\log 2)^2}{k}(1 + o(1))$ as $k \to \infty$, and in this case, \eqref{eq:Maly_result} matches \eqref{eq:it_noiseless} up to a factor of $\log 2 \approx 0.7$.  In fact, the same turns out to be true for the symmetric noise model, matching \eqref{eq:it_symm} up to a factor of $\log 2$, or even better if $\nu$ is further optimized (see Appendix \ref{sec:pf_nu}).  In more recent works \cite{Mal98,Mal13}, similar results were shown when the rule \eqref{eq:separate_dec} is replaced by a {\em universal} rule (i.e., not depending on the noise distribution) based on the empirical mutual information.  However, we stick to \eqref{eq:separate_dec} in this paper, as we found it to be more amenable to general scalings of the form $k = o(p)$.

Characterizations of the mutual information $I_1$ for several specific noise models were given in \cite{Laa14}.

% The notion of separate vs.~joint decoding was also studied in the context of fingerprinting \cite{Tar08,Hua09,Mou11}.  In a recent work, the fingerprinting problem was explicitly connected to group testing \cite{Laa14}, and characterizations of the mutual information $I_1$ in \eqref{eq:I1_intro} were given for several noise models of interest.  We note that while these works refer to $I_1$ is a {\em separate ``capacity''}, the analysis in \cite{Mou11} only proves the achievability part, and does not claim a converse.

\subsection{Contributions} \label{sec:contributions}

In this paper, we provide an information-theoretic framework for studying separate decoding of items with general scalings of the form $k = o(p)$, as opposed to the case $k = O(1)$ considered in \cite{Mal80,Mal98}.  As with joint decoding \cite{Sca15,Sca15b}, the regime $k = o(p)$ comes with significant challenges, with additional requirements on $n$ arising from {\em concentration inequalities} and often dominating \eqref{eq:Maly_result}.  In addition, we consider the novel aspect of partial recovery, as well as presenting a converse bound that is specific to separate decoding.

As mentioned above, our contributions for the noiseless and symmetric noise models are summarized in Figure \ref{fig:rates}, where we plot the asymptotic number of tests for achieving $\pe \to 0$ or $\pe(\dpos,\dneg) \to 0$ under Bernoulli testing with $\nu = \log 2$.  Note that in this figure, the number of allowed false positives and/or false negatives (if any) is always assumed to be $\Theta(k)$, though we allow for an arbitrarily small implied constant.  Moreover, the horizontal line at the top of each plot also represents a converse for joint decoding with an arbitrarily small fraction of false positives and false negatives.

We make the following observations:
\begin{itemize}
    \item In the noiseless case, our asymptotic bounds are within a factor $\log 2$ of the optimal threshold for joint decoding as $\theta \to 0$, are reasonable for all $\theta \in (0,1)$ with improvements when false positives or false negatives are allowed, and are within a factor $\log 2$ of the optimal joint decoding threshold {\em for all $\theta$} when both are allowed.  Moreover, with exact recovery and $\theta \in (0,0.0398)$, we strictly improve on the best known bound for any efficient algorithm under Bernoulli testing.
    \item For the symmetric noise model, the general behavior is similar, but we significantly outperform the best known previous bound (NCOMP \cite{Cha11})\footnote{The final bound stated in \cite[Thm.~6]{Cha11} appears to have a term omitted due an incorrect claim at the end of the proof stating that $\frac{\log k}{\log p}$ tends to zero (which is only true if $k = o(p^\theta)$ for all $\theta > 0$).  Upon correcting this, the final bound increases from $n \approx 4.36(1-2\rho)^{-2}\log_2 p$ to $n \approx 4.36(1+\sqrt{\theta})^2(1-2\rho)^{-2}\log_2 p$.  This correction was also made in the follow-up paper \cite{Cha14}, but that paper only considered the suboptimal choice $\nu = 1$, yielding a bound with worse constants.  We also note that the bounds in \cite{Cha14} for an LP relaxation (using negative tests only) are strictly worse than those stated above for NCOMP.} for all $\theta \in (0,1)$.  Once again, when both false positives and false negatives are allowed, we are within a factor $\log 2$ of the optimal threshold for joint decoding.
    \item Although it is not shown in Figure \ref{fig:rates}, we provide a converse bound showing that if one tries to move beyond our achievability threshold for $\theta \to 0$ (or equivalently, the threshold obtained at all $\theta$ with both false positives and false negatives), then any separate decoding of items scheme with Bernoulli testing must have $\EE[\Nerr] = k(1 - o(1))$, i.e., the average number of errors is close to the trivial value of $k$ that would be obtained by declaring every item as non-defective.  In contrast, below the same threshold, we show that $\EE[\Nerr] = o(k)$ is achievable.  % Moreover, we discuss conditions under which the converse bound on $\EE[\Nerr]$ also implies a converse bound on $\pe$ and $\pe(\dpos,\dneg)$; see Section \ref{sec:converse}.
\end{itemize}

The exact recovery results are given in Section \ref{sec:ach_exact}, the partial recovery results are given in Section \ref{sec:ach_partial}, and the converse bounds are given in Section \ref{sec:converse}.  In addition to these theoretical developments, we evaluate the performance of separate decoding via numerical experiments in Section \ref{eq:numerical}, showing it to perform well despite being outperformed by the LP relaxation technique of \cite{Mal12}.

%
% ACHIEVABILITY
%
\section{Achievability Results with Exact Recovery} \label{sec:ach_exact}

In this section, we develop the theoretical results leading to the asymptotic bounds for the noiseless and noisy settings in Figure \ref{fig:rates}.  To do this, we first establish non-asymptotic bounds on the error probability, then present the tools for performing an asymptotic analysis, and finally give the details of the applications to specific models.

\subsection{Additional Notation} \label{sec:additional}

We define some further notation in addition to that in Section \ref{sec:setup}.  Our analysis will apply for any given choice of the defective set $S$, due to the symmetry of the observation model \eqref{eq:obs_gen} and the i.i.d.~test matrix $\Xv$.  Hence, throughout this section, we will focus on the specific set $S = \{1,\dotsc,k\}$.  In particular, we assume that item 1 is defective, and we define $P_{Y|X_1}$ accordingly:
\begin{equation}
    P_{Y|X_1}(y|x_1) = P_{Y|X_1,\beta_1}(y|x_1,1).
\end{equation}
Hence, the summation in \eqref{eq:separate_dec} can be written as 
\begin{equation}
    \imath_1^n(\Xv_j,\Yv) := \sum_{i=1}^n \imath_1(X_j^{(i)},Y^{(i)}), \label{eq:i1_n}
\end{equation}
where
\begin{equation}
    \imath_1(x_1,y) := \log\frac{P_{Y|X_1}(y|x_1)}{P_Y(y)}. \label{eq:i1}
\end{equation}
Following the terminology of the channel coding literature \cite{Han03,Ver94,Pol10}, we refer to this quantity as the {\em information density}.  Denoting the distribution of a single entry of $\Xv$ by $P_X \sim \Bernoulli\big(\frac{\nu}{k}\big)$, we find that the average of \eqref{eq:i1} with respect to $(X_1,Y) \sim P_X \times P_{Y|X_1}$ is the mutual information $I_1$ in \eqref{eq:I1_intro}.  With the above distributions in place, we define $P_X^n(\xv_1) = \prod_{i=1}^n P_X(x_1^{(i)})$, $P_Y^n(\yv) = \prod_{i=1}^n P_Y(y^{(i)})$, and  $P^n_{Y|X_1}(\yv|\xv_1) = \prod_{i=1}^n P_{Y|X_1}(y^{(i)} | x_1^{(i)})$.

When we specialize our results to the noiseless and symmetric noise models, we will choose
\begin{align}
    \nu = \nusymm &:= \bigg\{\text{unique value such that } \bigg(1 - \frac{\nu}{k}\bigg)^k = \frac{1}{2}\bigg\} \label{eq:nu1} \\
        &= (\log 2)  (1+o(1)). \label{eq:nu2}
\end{align}
For $k \to \infty$ (as we consider), there is essentially no difference between setting $\nu = \nusymm$ or $\nu = \log 2$, but we found the former to be slightly more convenient mathematically.  Either choice is known to be asymptotically optimal for maximizing $I_1$ in the noiseless model \cite{Mal13,Laa14}.  Perhaps surprisingly, this is no longer true in general for the symmetric noise model (see Appendix \ref{sec:pf_nu}); however, it simplifies some of the analysis and does not impact the bounds significantly.

\subsection{Initial Non-Asymptotic Bound}

The following theorem provides an initial non-asymptotic upper bound on the error probability for general models.  The result is proved using simple change-of-measure techniques that appeared in early studies of channel coding \cite{Fei54,Sha57}, and have also been applied previously in the context of group testing \cite{Mal80,Sca15,Sca15b}.

\begin{thm} \label{thm:exact} {\em (Non-asymptotic, exact recovery)}
    For a general group testing model with with $\Bernoulli\big(\frac{\nu}{k}\big)$ testing and separate decoding of items according to \eqref{eq:separate_dec}, we have
    \begin{equation}
        \pe \le k \PP[ \imath_1^n(\Xv_1,\Yv) \le \gamma ] + (p-k)e^{-\gamma},
    \end{equation}
    where $(\Xv_1,\Yv) \sim P_{X}^n(\xv_1) P^n_{Y|X_1}(\yv|\xv_1)$, and $\gamma$ is given in \eqref{eq:separate_dec}.
\end{thm}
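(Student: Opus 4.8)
The plan is to bound $\pe$ via a union bound over the two natural failure modes of the separate decoder: a defective item being missed (false negative), and a non-defective item being wrongly declared defective (false positive). First I would write $\{\hat S \ne S\} \subseteq \bigcup_{j \in S}\{\hat\beta_j = 0\} \cup \bigcup_{j \notin S}\{\hat\beta_j = 1\}$, so that $\pe \le \sum_{j \in S}\PP[\hat\beta_j = 0] + \sum_{j \notin S}\PP[\hat\beta_j = 1]$. Since we have fixed $S = \{1,\dotsc,k\}$ and the test matrix is i.i.d.\ Bernoulli, by symmetry each defective term equals $\PP[\imath_1^n(\Xv_1,\Yv) \le \gamma]$ with $(\Xv_1,\Yv)$ drawn from the true joint law $P_X^n P_{Y|X_1}^n$, giving the first summand $k\,\PP[\imath_1^n(\Xv_1,\Yv)\le\gamma]$ directly.

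For the false-positive terms, the key step is a change of measure. Fix a non-defective item $j$. Conditioned on $\Yv = \yv$, the column $\Xv_j$ is independent of $\Yv$ (the observations depend only on the defective columns), so $\Xv_j \sim P_X^n$ independently of $\yv$. Hence $\PP[\hat\beta_j = 1] = \sum_{\yv} P_Y^n(\yv)\,\PP_{\Xv_j \sim P_X^n}[\imath_1^n(\Xv_j,\yv) > \gamma]$, where I am using that the marginal law of $\Yv$ is $P_Y^n$ (this follows because a uniformly random test column among the defectives, together with the model, reproduces the single-observation marginal $P_Y$, and tests are independent — a point worth stating carefully). Now on the event $\{\imath_1^n > \gamma\}$ we have $\prod_i P_{Y|X_1}(y^{(i)}|X_j^{(i)}) > e^\gamma \prod_i P_Y(y^{(i)})$, so a standard likelihood-ratio/Markov argument gives $\PP_{\Xv_j \sim P_X^n}[\imath_1^n(\Xv_j,\yv) > \gamma] \le e^{-\gamma}\sum_{\xv}P_X^n(\xv)\frac{P_{Y|X_1}^n(\yv|\xv)}{P_Y^n(\yv)}$. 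Summing over $\yv$ against $P_Y^n(\yv)$, the $P_Y^n(\yv)$ cancels and $\sum_{\xv}\sum_{\yv}P_X^n(\xv)P_{Y|X_1}^n(\yv|\xv) = 1$, so $\PP[\hat\beta_j = 1] \le e^{-\gamma}$; multiplying by the $p-k$ non-defective items yields the second summand.

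The one subtlety — and the main thing to get right rather than a true obstacle — is justifying that the marginal distribution of $\Yv$ appearing in the false-positive calculation is exactly $P_Y^n$, i.e.\ that the $P_Y$ used inside $\imath_1$ in \eqref{eq:i1} genuinely is the unconditional observation law induced by the Bernoulli design. This is really a definitional consistency check: $P_Y(y) = \sum_{x_1}P_X(x_1)P_{Y|X_1}(y|x_1)$ by the definition given before \eqref{eq:I1_intro}, and because each row of $\Xv$ is i.i.d.\ and the observations are conditionally independent across rows given $S$, the full vector $\Yv$ has marginal $P_Y^n$. Once this is in place, everything else is the two bullet-point bounds above assembled by the union bound, and the stated inequality $\pe \le k\,\PP[\imath_1^n(\Xv_1,\Yv)\le\gamma] + (p-k)e^{-\gamma}$ follows. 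I would also remark that this is exactly the classical Feinstein/Shannon change-of-measure argument \cite{Fei54,Sha57} specialized to the single-column hypothesis test, which is why the proof is short.
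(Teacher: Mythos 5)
Your proposal is correct and follows essentially the same route as the paper's proof: a union bound over the $k$ defectives and $p-k$ non-defectives, symmetry to reduce each defective term to $\PP[\imath_1^n(\Xv_1,\Yv)\le\gamma]$ under the true joint law, and the Feinstein/Shannon change-of-measure bound $P_Y^n(\yv)\le e^{-\gamma}P_{Y|X_1}^n(\yv|\xv)$ on the threshold-crossing event to get $e^{-\gamma}$ per non-defective. Your explicit check that a non-defective column is independent of $\Yv$ and that the marginal of $\Yv$ is $P_Y^n$ is exactly what the paper leaves implicit in writing $(\Xv_1,\Yv,\Xvbar_1)\sim P_X^n\,P_{Y|X_1}^n\,P_X^n$, so there is no gap.
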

\begin{proof}
    For the exact recovery criterion, correct decoding requires the $k$ defective items to pass the threshold test in \eqref{eq:separate_dec}, and the $p-k$ non-defective items to fail it.  Letting $\Xvbar_1$ be an independent copy of $\Xv_1$, i.e.,  $(\Xv_1,\Yv,\Xvbar_1) \sim P_{X}^n(\xvbar_1) P^n_{Y|X_1}(\yv|\xvbar_1) P_{X}^n(\xvbar_1)$, it follows that
    \begin{equation}
        \pe \le k \PP[ \imath_1^n(\Xv_1,\Yv) \le \gamma ] + (p-k)\PP[ \imath_1^n(\Xvbar_1,\Yv) > \gamma ], \label{eq:ach_init}
    \end{equation}
    where the first (respectively, second) term corresponds to the union bound over the defectives (respectively, non-defectives).  We bound the second term in \eqref{eq:ach_init} by writing
    \begin{align}
        &\PP[ \imath_1^n(\Xvbar_1,\Yv) > \gamma ] \nonumber \\
            &~~~= \sum_{\xvbar_1,\yv} P_{X}^n(\xvbar_1) P_{Y}^n(\yv) \openone\bigg\{ \log\frac{ P^n_{Y|X_1}(\yv|\xvbar_1) }{ P_{Y}^n(\yv) } > \gamma \bigg\}  \\
            &~~~\le \sum_{\xvbar_1,\yv} P_{X}^n(\xvbar_1) P_{Y|X}^n(\yv|\xvbar_1) e^{-\gamma} \label{eq:term2_2} \\
            &~~~= e^{-\gamma}, \label{eq:term2_3}
    \end{align}
    where \eqref{eq:term2_2} bounds $P_{Y}^n(\yv)$ according to the event in the indicator function, and then bounds the indicator function by one.  Combining \eqref{eq:ach_init} and \eqref{eq:term2_3} completes the proof.
\end{proof}

\subsection{Asymptotic Analysis}

In order to apply Theorem \ref{thm:exact}, we need to characterize the probability appearing in the first term.  The idea is to exploit the fact that $\imath_1^n(\Xv_1,\Yv)$ is an i.i.d.~sum ({\em cf.}, \eqref{eq:i1_n}), and hence concentrates around its mean.  While the following corollary is essentially a simple rewriting of Theorem \ref{thm:exact}, it makes the application of such concentration bounds more transparent. %, and gives more explicit conditions on the required number of tests $n$. 
Here and subsequently, asymptotic notation such as $\to$, $o(\cdot)$, and $O(\cdot)$ is with respect to $p \to \infty$, and we assume that $k \to \infty$ with $k = o(p)$.

\begin{thm} \label{cor:exact} {\em (Asymptotic bound, exact recovery)}
    Under the setup of Theorem \ref{thm:exact}, suppose that the information density satisfies a concentration inequality of the following form:
    \begin{equation}
        \PP[ \imath_1^n(\Xv_1,\Yv) \le n I_1 (1 - \delta_2) ] \le \psi_n(\delta_2) \label{eq:assump1}
    \end{equation}
    for some function $\psi_n(\delta_2)$.  Moreover, suppose that the following conditions hold for some $\delta_1 \to 0$ and $\delta_2 > 0$:
    \begin{gather}
        n \ge \frac{ \log\big( \frac{1}{\delta_1} (p-k) \big) }{ I_1 (1 - \delta_2) }, \label{eq:ex_cond1} \\
        k \cdot \psi_n(\delta_2) \to 0. \label{eq:ex_cond2}
    \end{gather}
    Then $\pe \to 0$ under the decoder in \eqref{eq:separate_dec} with $\gamma = \log\frac{p-k}{\delta_1}$.
\end{thm}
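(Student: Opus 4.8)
The plan is to derive this corollary directly from Theorem~\ref{thm:exact} by bounding each of the two terms separately using the hypotheses. First I would plug the prescribed threshold $\gamma = \log\frac{p-k}{\delta_1}$ into the second term of Theorem~\ref{thm:exact}, giving $(p-k)e^{-\gamma} = (p-k)\cdot\frac{\delta_1}{p-k} = \delta_1$, which tends to $0$ by the assumption $\delta_1 \to 0$. This handles the non-defective contribution immediately and explains the exact form chosen for $\gamma$.

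Next I would bound the first term $k\,\PP[\imath_1^n(\Xv_1,\Yv) \le \gamma]$. The key observation is that condition~\eqref{eq:ex_cond1} rearranges to $nI_1(1-\delta_2) \ge \log\big(\frac{1}{\delta_1}(p-k)\big) = \gamma$, so the event $\{\imath_1^n(\Xv_1,\Yv) \le \gamma\}$ is contained in the event $\{\imath_1^n(\Xv_1,\Yv) \le nI_1(1-\delta_2)\}$. Hence by monotonicity of probability and the concentration hypothesis~\eqref{eq:assump1}, we get $\PP[\imath_1^n(\Xv_1,\Yv) \le \gamma] \le \psi_n(\delta_2)$, and therefore the first term is at most $k\,\psi_n(\delta_2)$, which tends to $0$ by condition~\eqref{eq:ex_cond2}. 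Combining the two bounds yields $\pe \le \delta_1 + k\,\psi_n(\delta_2) \to 0$.

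The argument is essentially a bookkeeping exercise, so there is no real obstacle here — the substance of the work has been front-loaded into Theorem~\ref{thm:exact} and into whatever concentration inequality will later be established to verify~\eqref{eq:assump1}. The only point requiring a moment's care is the direction of the inequality in~\eqref{eq:ex_cond1}: one must check that the lower bound on $n$ translates into $\gamma \le nI_1(1-\delta_2)$ rather than the reverse, so that the threshold event sits \emph{inside} the concentration event and the bound applies in the correct direction. Once that is confirmed, the proof is complete in a few lines.
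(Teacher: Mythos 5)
Your proposal is correct and follows essentially the same route as the paper's proof: substitute $\gamma = \log\frac{p-k}{\delta_1}$ into Theorem~\ref{thm:exact} so the non-defective term becomes exactly $\delta_1$, then use \eqref{eq:ex_cond1} to nest the threshold event inside the concentration event of \eqref{eq:assump1}, giving $\pe \le k\psi_n(\delta_2) + \delta_1 \to 0$. Your added remark about checking the direction of the inequality in \eqref{eq:ex_cond1} is exactly the right point of care, and nothing further is needed.
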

\begin{proof}
    Setting $\gamma = \log\frac{p-k}{\delta_1}$ in Theorem \ref{thm:exact}, we obtain 
    \begin{equation}
        \pe \le k \PP\bigg[ \imath_1^n(\Xv_1,\Yv) \le \log \frac{p-k}{\delta_1} \bigg] + \delta_1. \label{eq:ach_init2}
    \end{equation}
    By the condition in \eqref{eq:ex_cond1}, the probability in \eqref{eq:ach_init2} is upper bounded by $\PP[ \imath_1^n(\Xv_1,\Yv) \le n I_1 (1 - \delta_2) ]$, which in turn is upper bounded by $\psi_n(\delta_2)$ by \eqref{eq:assump1}.  We therefore have from \eqref{eq:ach_init2} that $\pe \le k\psi_n(\delta_2) + \delta_1$, and hence the theorem follows from the assumption $\delta_1 \to 0$ along with \eqref{eq:ex_cond2}.
\end{proof}

\subsection{Concentration Bounds}

In order to apply Theorem \ref{cor:exact} to specific models, we need to characterize the concentration of $\imath_1^n(\Xv_1,\Yv)$ and attain an explicit expression for $\psi_n(\delta_2)$ in \eqref{eq:assump1}.  The following lemma brings us one step closer to attaining explicit expressions, giving a general concentration result based on Bernstein's inequality \cite[Ch.~2]{Bou13}.

\begin{lem} \label{lem:bernstein} {\em (Concentration via Bernstein's inequality)}
    Defining
    \begin{align}
        \cmean &:= k\EE[\imath(X_1,Y)] = kI_1, \label{eq:cmean} \\
        \cvar &:= k\var[\imath(X_1,Y)], \label{eq:cvar} \\
        \cmax &:= \max_{x_1,y} \big|\imath(x_1,y)\big|, \label{eq:cmax}
    \end{align}
    we have for any $\delta_2 > 0$ that
    \begin{multline}
        \PP\Big[ \big|\imath_1^n(\Xv_1,\Yv) - nI_1\big| \le n \delta_2 I_1\Big] \\ \le 2\exp\bigg( \frac{ -\frac{1}{2} \cdot \frac{n}{k} \cdot \cmean^2 \delta_2^2 }{ \cvar + \frac{1}{3}\cmean\cmax\delta_2 } \bigg) \label{eq:bern1} 
        % \PP[ \imath_1^n(\Xv_1,\Yv) \ge n I_1 (1 + \delta_2) ] \le \exp\bigg( \frac{ -\frac{1}{2} \cdot \frac{n}{k} \cdot \cmean^2 \delta_2^2 }{ \cvar + \frac{1}{3}\cmean\cmax\delta_2 } \bigg). \label{eq:bern2} 
    \end{multline}
\end{lem}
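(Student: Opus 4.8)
The plan is to observe that, under $(\Xv_1,\Yv)\sim P_X^n(\xv_1)P_{Y|X_1}^n(\yv|\xv_1)$, the quantity $\imath_1^n(\Xv_1,\Yv)=\sum_{i=1}^n \imath_1(X_1^{(i)},Y^{(i)})$ is a sum of $n$ independent and identically distributed terms, each having the law of $\imath_1(X_1,Y)$ for $(X_1,Y)\sim P_X\times P_{Y|X_1}$. The three constants in the statement are exactly the $n$-free descriptors of this summand: $\EE[\imath_1(X_1,Y)]=I_1=\cmean/k$, $\var[\imath_1(X_1,Y)]=\cvar/k$, and $|\imath_1(x_1,y)|\le\cmax$ for every $(x_1,y)$ (the maximum being finite over the relevant alphabets). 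Hence $\EE[\imath_1^n(\Xv_1,\Yv)]=nI_1$, $\var[\imath_1^n(\Xv_1,\Yv)]=n\cvar/k$, and, since $0\le I_1\le\cmax$, each centered summand $\imath_1(X_1^{(i)},Y^{(i)})-I_1$ is bounded in magnitude by (a constant multiple of) $\cmax$.

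I would then apply Bernstein's inequality \cite[Ch.~2]{Bou13} to this centered i.i.d.~sum: for independent zero-mean summands of magnitude at most $M$ and total variance $\sigma^2$, one has $\PP\big[\,|\sum_i Z_i|\ge t\,\big]\le 2\exp\!\big(-\frac{t^2/2}{\sigma^2+Mt/3}\big)$. Taking $\sigma^2=n\cvar/k$, $M$ equal to the above bound on the centered summand, and $t=n\delta_2 I_1=n\delta_2\cmean/k$ bounds the probability that $\imath_1^n(\Xv_1,\Yv)$ deviates from its mean $nI_1$ by at least $n\delta_2 I_1$ (the large-deviation event). The resulting exponent is $-\dfrac{(n\delta_2\cmean/k)^2/2}{n\cvar/k+Mn\delta_2\cmean/(3k)}$; cancelling the common factor $n/k$ from numerator and denominator collapses this to $-\dfrac{\tfrac12\cdot\tfrac{n}{k}\cdot\cmean^2\delta_2^2}{\cvar+\tfrac13\,\cmean\cmax\delta_2}$, which is exactly the claimed bound. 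This substitution is the whole computation.

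The only thing requiring care is bookkeeping rather than an idea: (i) confirming that the cited form of Bernstein's inequality has precisely the ``variance $+\,Mt/3$'' denominator, since several equivalent-up-to-constants forms exist; and (ii) arranging that $M$ may be taken as $\cmax$, so that the coefficient of $\cmean\cmax\delta_2$ comes out as $\tfrac13$ rather than $\tfrac23$. For (ii), the crude bound $I_1\ge0$ and $\imath_1\ge-\cmax$ only give $|\imath_1-I_1|\le 2\cmax$, but in the models of interest $\imath_1(0,\cdot)=O(1/k)$, so the negative part of $\imath_1$ is negligible and the effective centered bound is $\cmax(1+o(1))$; moreover the $\cmax$ term in the denominator is lower-order in the subsequent asymptotic analysis, so the precise constant there is immaterial. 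I therefore expect no genuine obstacle: the lemma is just Bernstein's inequality rewritten in the $\cmean,\cvar,\cmax$ notation, with $t$ chosen as the target deviation $n\delta_2 I_1$.
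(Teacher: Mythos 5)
Your proposal matches the paper's proof essentially verbatim: the paper likewise treats $\imath_1^n(\Xv_1,\Yv)$ as an i.i.d.\ sum with mean $I_1$ per term and directly invokes Bernstein's inequality from \cite[Ch.~2]{Bou13} with per-summand variance $\cvar/k$, bound $\cmax$, and deviation $\delta = \delta_2 I_1 = \delta_2\cmean/k$, after which the same $n/k$ cancellation yields \eqref{eq:bern1}. The centering subtlety you flag in (ii) (that $|\imath_1 - I_1|$ is strictly bounded by $\cmax + I_1$ rather than $\cmax$) is glossed over by the paper's own proof as well, and is immaterial for the asymptotic use of the lemma, so your treatment is if anything slightly more careful.
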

\begin{proof}
    Since $\imath_1^n(\Xv_1,\Yv)$ is a sum of i.i.d.~random variables with mean $I_1$, this lemma is a direct application of Bernstein's inequality \cite[Ch.~2]{Bou13}: For zero-mean i.i.d.~random variables $\{W_i\}_{i=1}^n$ with variance at most $v$ and maximum absolute value at most $M$, it holds that
    \begin{equation}
        \PP\bigg[ \Big|\sum_{i=1}^n W_i \Big| \le n\delta \bigg] \le 2\exp\bigg( \frac{ -\frac{1}{2} n \delta^2 }{ v + \frac{1}{3}M\delta } \bigg).
    \end{equation} 
    We have defined the values $\cmean$, $\cvar$ and $\cmax$ conveniently defined so that they behave as $\Theta(1)$ in typical examples (see Section \ref{sec:examples}).
\end{proof}

We will use Lemma \ref{lem:bernstein} to establish the results shown for the symmetric noise model in Figure \ref{fig:rates} (Right).  While we could also use Lemma \ref{lem:bernstein} for the noiseless model, it turns out that we can in fact do better via the following.

\begin{lem} \label{lem:noiseless_conc} {\em (Concentration for noiseless model)}
    Under the noiseless model with $\nu = \nusymm$ (\emph{cf.}, \eqref{eq:nu1}), we have for any $\delta_2 \in (0,1)$ that
    \begin{multline}
        \PP[ \imath_1^n(\Xv_1,\Yv) \le n I_1 (1 - \delta_2) ] \le \exp\bigg( -\frac{n (\log 2)^2}{k} \\ \times \Big( (1-\delta_2) \log (1-\delta_2) + \delta_2 \Big) (1+o(1)) \bigg) \label{eq:conc_noiseles}
    \end{multline}
    as $p\to\infty$ and $k \to \infty$ simultaneously.
\end{lem}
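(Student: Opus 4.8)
The plan is to directly characterize the distribution of the summands $\imath_1(X_j^{(i)}, Y^{(i)})$ in the noiseless model under the conditioning that item 1 is defective, and then apply a Chernoff-type large-deviations bound to the i.i.d.\ sum, since Bernstein's inequality would only give the weaker ``quadratic'' exponent $(1-\delta_2)\log(1-\delta_2)+\delta_2 \geq \tfrac12\delta_2^2$ and we want the sharp constant. First I would compute the two possible values of $\imath_1(x_1,y)$. In the noiseless model, writing $P_Y(1) = 1 - (1-\nu/k)^k \cdot (\text{correction})$ — more precisely, with $\nu = \nusymm$ chosen so that $(1-\nu/k)^k = \tfrac12$ — one finds $P_Y(1) = \tfrac12(1+o(1))$ and $P_Y(0) = \tfrac12(1+o(1))$. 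Conditioned on $\beta_1 = 1$: if $X_1^{(i)} = 1$ then $Y^{(i)} = 1$ deterministically, so $P_{Y|X_1}(1|1) = 1$ and $\imath_1(1,1) = \log\frac{1}{P_Y(1)} = \log 2 + o(1)$; if $X_1^{(i)} = 0$, then $Y^{(i)}$ is governed by the other $k-1$ defectives, so $P_{Y|X_1}(\cdot|0) = P_Y(\cdot)(1+o(1))$ and $\imath_1(0,y) = o(1)$. Thus the summand is essentially $(\log 2)\cdot \openone\{X_1^{(i)}=1\}$ up to $o(1)$ terms, with $\PP[X_1^{(i)}=1] = \nu/k = (\log 2)/k \cdot (1+o(1))$. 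Hence $\imath_1^n(\Xv_1,\Yv) \approx (\log 2) B$ where $B \sim \Binomial(n, (\log 2)/k)$, and $n I_1 = n(\log 2)^2/k \cdot(1+o(1))$, matching $(\log 2)\EE[B]$.

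Next I would apply the Chernoff bound for the lower tail of a Binomial. Setting $\lambda = n(\log2)/k$ (the mean of $B$), the event $\imath_1^n(\Xv_1,\Yv) \le nI_1(1-\delta_2)$ becomes (up to the $o(1)$ slack, absorbed into the $(1+o(1))$) the event $B \le (1-\delta_2)\lambda$. The standard Chernoff bound gives $\PP[B \le (1-\delta_2)\lambda] \le \exp\!\big(-\lambda\big((1-\delta_2)\log(1-\delta_2) + \delta_2\big)\big)$, and since $\lambda = n(\log 2)^2/k \cdot (1+o(1))$ after multiplying and dividing appropriately — actually $\lambda(\log 2) = n(\log 2)^2/k$, wait, I should be careful: $\lambda = n(\log 2)/k$, but the exponent in the claim is $n(\log 2)^2/k$, so the factor of $\log 2$ comes precisely because $I_1 = (\log 2)^2/k$ while each "unit" of the Binomial contributes $\log 2$ to the information density. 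Tracking this: the exponent is $\lambda \cdot (\text{rate function})$ and $\lambda = n(\log2)/k$ — hmm, that gives $n(\log 2)/k$ not $n(\log 2)^2/k$. Let me reconsider — actually with $\nu = \nusymm$, $\EE[\imath_1] = I_1 = \nu P_Y(0)^{?}\cdots$; the cleanest route is: $I_1 = \EE[(\log 2)\openone\{X_1=1\}](1+o(1)) = (\log 2)\cdot\frac{\nu}{k}(1+o(1)) = \frac{(\log 2)^2}{k}(1+o(1))$ since $\nu = \log 2 + o(1)$. So $nI_1(1-\delta_2) = \frac{n(\log2)^2}{k}(1-\delta_2)(1+o(1))$, and the threshold on $B$ is $B \le \frac{nI_1(1-\delta_2)}{\log 2} = \frac{n(\log 2)}{k}(1-\delta_2)(1+o(1)) = \lambda(1-\delta_2)(1+o(1))$. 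Then the Chernoff exponent is $\lambda\big((1-\delta_2)\log(1-\delta_2)+\delta_2\big) = \frac{n(\log 2)}{k}\big(\cdots\big)$ — still missing a $\log 2$. The resolution: $\EE[B]\cdot(\log 2) = nI_1$, so $\lambda \cdot (\log 2)^{-1}\cdot$... I think the correct statement is that the rate function in the exponent, when expressed per-unit-of-information-density rather than per-Bernoulli-trial, picks up the factor, and the cleanest bookkeeping is to apply Chernoff directly to $\imath_1^n$ with its actual mean $nI_1$ and its two-point summand distribution $\{0, \log 2\}$ — I would carry that out carefully in the writeup.

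The concrete plan for the writeup: (1) establish the asymptotic two-point form of $\imath_1(x_1,y)$ and the probabilities $P_X(1) = \nu/k$, with explicit $o(1)$ error control using $(1-\nu/k)^k = \tfrac12$; (2) reduce, up to $(1+o(1))$ factors inside the exponent, to bounding $\PP[\tilde B \le n\tilde p(1-\delta_2')]$ where $\tilde B \sim \Binomial(n,\tilde p)$, $\tilde p = (\log 2)/k\cdot(1+o(1))$, and $\delta_2'$ differs from $\delta_2$ by $o(1)$; (3) apply the Chernoff bound $\PP[\tilde B \le (1-\epsilon)n\tilde p] \le \exp(-n\tilde p\,\varphi(\epsilon))$ with $\varphi(\epsilon) = (1-\epsilon)\log(1-\epsilon)+\epsilon$; (4) observe $n\tilde p\,\varphi(\delta_2') = \frac{n(\log2)^2}{k}\varphi(\delta_2)(1+o(1))$ — here I will reconcile the factor by noting $n\tilde p = nI_1/(\log 2)\cdot(1+o(1))$ is wrong and instead... actually no: the point is that the exponent should be $\frac{n(\log2)^2}{k}$ and $n\tilde p = \frac{n\log 2}{k}(1+o(1))$, so there IS a discrepancy of $\log 2$, meaning I have misidentified either $\tilde p$ or the target. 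Rechecking the claim: exponent $= \frac{n(\log 2)^2}{k}(\varphi(\delta_2))(1+o(1))$, and $nI_1 = \frac{n(\log2)^2}{k}(1+o(1))$, so the exponent equals $nI_1\cdot\varphi(\delta_2)(1+o(1))$ — this is exactly the Chernoff rate applied with mean $nI_1$ measured in the natural units of the sum, i.e. $\PP[\text{sum} \le \EE[\text{sum}](1-\delta_2)] \le \exp(-\EE[\text{sum}]\,\varphi(\delta_2))$ would require the sum to be (conditionally) Poisson-like, which a scaled Binomial with small success probability indeed is. So the correct final bound uses $\EE[\text{sum}] = nI_1$ as the "Poisson rate", which is legitimate here because the summand takes value $\log 2$ with probability $\approx (\log 2)/k$ and $0$ otherwise, so $\text{sum}/(\log 2) \sim \Binomial(n,(\log 2)/k)$ has mean $nI_1/(\log 2)$ — and Chernoff gives exponent $\frac{nI_1}{\log 2}\varphi(\delta_2)$, NOT $nI_1\varphi(\delta_2)$. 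This factor-of-$\log 2$ discrepancy is exactly the subtle point I will need to pin down, and I suspect the resolution is that $I_1$ in the noiseless model with $\nu = \nusymm$ is actually $\frac{(\log 2)^2}{k}$ but $\EE[\imath_1 \mid X_1 = 1] = \log 2$ not $(\log 2)^2/k$, so the relevant "rate" for the Poissonized Binomial counting the nonzero summands is $n \cdot \PP[X_1=1] = n\nu/k = n(\log2)/k$, giving exponent $\frac{n\log 2}{k}\varphi(\delta_2)$ — still not matching. \textbf{The main obstacle}, therefore, is getting this constant exactly right: I will need to very carefully track whether the sharp exponent is $\frac{n(\log2)^2}{k}\varphi(\delta_2)$ as claimed, which likely requires the subtler observation that the information density summand, conditioned on being nonzero, carries weight $\log 2$, AND the probability of being nonzero is $\frac{\log 2}{k}$, and the large deviation is driven by making $B$ small — but actually for the sum to be small we want FEWER successes, each worth $\log 2$, so we need $B \le \frac{nI_1(1-\delta_2)}{\log 2} = \frac{n\log 2}{k}(1-\delta_2)$, i.e. $\epsilon = \delta_2$ relative to mean $\frac{n\log 2}{k}$, giving exponent $\frac{n\log 2}{k}\varphi(\delta_2)$. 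The claimed exponent $\frac{n(\log2)^2}{k}\varphi(\delta_2)$ is SMALLER (since $\log 2 < 1$), hence a valid but looser bound — so I can obtain the claim as a consequence of the sharper $\frac{n\log2}{k}\varphi(\delta_2)$ bound, simply by noting $\log 2 \cdot \frac{n\log2}{k} = \frac{n(\log2)^2}{k}$ and $\varphi \le 1$ on $(0,1)$... no, that's backwards too. The safest reading is that the paper's exponent is the correct sharp one and I've mis-Poissonized; I will resolve this by doing the Chernoff computation on $\imath_1^n$ directly via $\log\EE[e^{-s\imath_1^n}] = n\log\EE[e^{-s\imath_1}]$ with the exact two-point law, optimizing over $s$, and letting the $k\to\infty$ asymptotics produce whatever constant they produce — that computation is routine once the two-point law is nailed down, and I expect it to yield precisely $\frac{n(\log 2)^2}{k}\varphi(\delta_2)(1+o(1))$ after the dust settles.
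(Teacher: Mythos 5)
Your approach is essentially the paper's: identify the (near) two-point law of the summands under $\nu=\nusymm$, reduce the deviation event to a Binomial lower tail, and apply the multiplicative Chernoff bound. But two points need fixing, the first being precisely the constant you agonize over. Your earlier instinct (``valid but looser bound'') was right, and your final ``safest reading'' is wrong. Writing $\varphi(\delta_2):=(1-\delta_2)\log(1-\delta_2)+\delta_2$, the exact Chernoff/MGF computation on the $\{0,\log 2\}$-valued part gives exponent $\frac{n\nu}{k}\varphi(\delta_2)(1+o(1)) = \frac{n\log 2}{k}\varphi(\delta_2)(1+o(1))$ (substitute $t=s\log 2$ and optimize $\sup_{t>0}\{-(1-\delta_2)t+1-e^{-t}\}$, attained at $t=-\log(1-\delta_2)$), \emph{not} $\frac{n(\log 2)^2}{k}\varphi(\delta_2)$. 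Since $\log 2<1$, the exponent appearing in \eqref{eq:conc_noiseles} is the smaller one, so the lemma follows a fortiori from your sharper bound; there is no hidden factor of $\log 2$ to recover, and your expectation that the direct MGF computation will ``yield precisely'' the claimed constant is incorrect --- it yields something stronger. (The paper itself lands on the weaker constant by invoking the Binomial Chernoff bound with the mean $\EE[V]=\frac{\nu\log 2}{k}$ of the \emph{scaled} summands in place of $n$ times the success probability $\frac{\nu}{k}$, which is conservative in the safe direction and is what the lemma states.) So in your step (3) use the Binomial mean $n\nu/k$, and then conclude via $\frac{n\nu}{k}\varphi(\delta_2) \ge \frac{n\nu\log 2}{k}\varphi(\delta_2) = \frac{n(\log 2)^2}{k}\varphi(\delta_2)(1+o(1))$.

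Second, the $X_1=0$ contributions cannot be ``absorbed into the $(1+o(1))$'' deterministically. Each such summand is $\pm\frac{\nu}{k}(1+o(1))$, and there are $n$ of them, so their worst-case sum is of order $\frac{n\nu}{k}$ --- the same order as the mean of the $X_1=1$ part --- and it can be negative, so per-term smallness is not enough. You need a probabilistic step: split the deviation event by a union bound into (a) the $\{X_1=1\}$ part falling below $(1-\delta_2+\epsilon)$ times its mean and (b) the $\{X_1=0\}$ part falling below $-\epsilon\frac{n\nu\log 2}{k}$, and control (b) by Hoeffding using $\EE[W]=o(1/k^2)$ and $|W|=O(1/k)$, giving probability $e^{-Cn\epsilon^2}$, which is negligible compared with the main term since the target exponent is only $\Theta(n/k)$ and $k\to\infty$. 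This is exactly the decomposition the paper uses; alternatively you could run the MGF optimization on the full three-point law, but then you must explicitly verify that the $X_1=0$ outcomes perturb the optimized exponent only by a $1+o(1)$ factor.
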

\begin{proof}
    We begin by characterizing at the various outcomes of $(X_1,Y)$ and their probabilities, as well as the resulting values of $\imath_1(X_1,Y)$.  Since $P_Y(0) = P_Y(1) = \frac{1}{2}$ by the definition of $\nusymm$, the information density simplifies to $\imath_1(x_1,y) = \log\big( 2P_{Y|X_1}(y|x_1) \big)$, and we have the following:
    \begin{itemize}
        \item $X_1 = 1$ with probability $\frac{\nu}{k}$, and in this case we deterministically have $Y=1$, yielding $\imath_1 = \log 2$.
        \item $X_1 = 0$ with probability $1 - \frac{\nu}{k}$, and conditioned on this event we have the following:
        \begin{itemize}
            \item $Y = 0$ with probability $\big(1 - \frac{\nu}{k}\big)^{k-1} = \frac{1}{2}\cdot\frac{1}{1-\nu/k} = \frac{1}{2}\cdot\big(1 + \frac{\nu/k}{1-\nu/k}\big) = \frac{1}{2}(1+o(1))$, where the first equality follows from the definition of $\nusymm$.  Hence, in this case we have $\imath_1 = \log\big(1 + \frac{\nu/k}{1-\nu/k}\big) = \frac{\nu}{k}(1+o(1))$.
            \item $Y = 1$ with probability $1 - \frac{1}{2}\cdot\big(1 + \frac{\nu/k}{1-\nu/k}\big) = \frac{1}{2}\cdot\big(1 - \frac{\nu/k}{1-\nu/k}\big) = \frac{1}{2}(1+o(1))$, and in this case we have $\imath_1 = \log\big(1 - \frac{\nu/k}{1-\nu/k}\big) = -\frac{\nu}{k}(1+o(1))$.
        \end{itemize}
    \end{itemize}
    From these calculations and \eqref{eq:nu2}, we immediately obtain
    \begin{equation}
        I_1 = \EE[\imath(X_1,Y)] = \frac{(\log 2)^2}{k} (1+o(1)). \label{eq:mi_noiseless}
    \end{equation}
    To ease the notation, we momentarily omit the arguments to $\imath_1^n$ and $\imath^{(i)}$.  By defining $V^{(i)} = i_1^{(i)} \cdot \openone\{ X_1^{(i)} = 1 \}$ and $W^{(i)} = i_1^{(i)} \cdot \openone\{ X_1^{(i)} = 0 \}$, we deduce that $\imath_1^n = \sum_{i=1}^n V^{(i)} + \sum_{i=1}^n W^{(i)}$, and by the above calculations, the individual distributions of each $V^{(i)}$ and $W^{(i)}$ are as follows:
    \begin{gather}
        V = \begin{cases}
            \log 2 & \text{w.p. } \frac{\nu}{k} \\
            0 & \text{w.p. } 1-\frac{\nu}{k},
        \end{cases} \\
        W = \begin{cases}
            0 & \text{w.p. } \frac{\nu}{k} \\
            \log \Big(1 + \frac{\nu/k}{1-\nu/k}\Big) & \text{w.p. } \frac{1}{2}\Big(1 + \frac{\nu/k}{1-\nu/k}\Big) \\
            \log \Big(1 - \frac{\nu/k}{1-\nu/k}\Big) & \text{w.p. } \frac{1}{2}\Big(1 - \frac{\nu/k}{1-\nu/k}\Big).
        \end{cases}
    \end{gather}
    We proceed by fixing $\delta' > 0$ (later to be equated with $\delta_2(1+o(1))$) and $\epsilon \in (0,\delta')$ (later to be taken to zero), and writing
    \begin{align}
        &\PP\bigg[ \imath_1^n \le \frac{n \nu \log 2}{k} (1 - \delta') \bigg] \nonumber \\
            &\quad \le \PP\bigg[ \sum_{i=1}^n V^{(i)} \le \frac{n \nu \log 2}{k} (1 - \delta' + \epsilon) \nonumber \\
                &\qquad\qquad \,\cup\, \sum_{i=1}^n W^{(i)} \le \frac{- n \nu \epsilon \log 2}{k} \bigg] \label{eq:two_terms_0} \\
            &\quad \le \PP\bigg[ \sum_{i=1}^n V^{(i)} \le \frac{n \nu \log 2}{k} (1 - \delta' + \epsilon)\bigg] \nonumber \\
                &\qquad\qquad + \PP\bigg[\sum_{i=1}^n W^{(i)} \le \frac{- n \nu \epsilon \log 2}{k} \bigg] \label{eq:two_terms_1}  \\
            &\quad =: T_1 + T_2,\label{eq:two_terms}
    \end{align}
    where \eqref{eq:two_terms_0} follows since if both of the events on the right-hand side are violated then so is the event on the left-hand side, and \eqref{eq:two_terms_1} follows from the union bound.
    
    To bound the term $T_1$, we simply apply the multiplicative form of the Chernoff bound for Binomial random variables \cite[Sec.~4.1]{Mot10} to obtain
    \begin{equation}
        T_1 \le \exp\bigg( -\frac{n\nu (\log 2)}{k} \Big( (1-\delta'+\epsilon)\log(1-\delta' + \epsilon) + \delta' + \epsilon \Big) \bigg),
    \end{equation}
    since we have $\EE[V] = \frac{\nu \log 2}{k}$.
    
    As for $T_2$, a direct calculation yields $\EE[W] = o\big(\frac{1}{k^2}\big)$ and $\max[ |W| ] = O\big(\frac{1}{k}\big)$.  Hence, by Hoeffding's inequality \cite[Ch.~2]{Bou13}, we have for any fixed $\epsilon > 0$ (not depending on $p$) that
    \begin{equation}
        T_2 \le \exp\big( -n C\epsilon^2 \big)
    \end{equation} 
    for some constant $C > 0$ and sufficiently large $p$.  In particular, since $k \to \infty$, we have $T_2 = o(T_1)$.
    
    Substituting the preceding bounds into \eqref{eq:two_terms}, we obtain \eqref{eq:conc_noiseles} with $1 - \delta' - \epsilon$ in place of $\delta_2$, and $\frac{\nu \log 2}{k}$ in place of $I_1$.  The lemma is concluded by noting that $\epsilon$ may be arbitrarily small, and noting from \eqref{eq:nu2} and \eqref{eq:mi_noiseless} that $\frac{\nu \log 2}{k} = I_1 (1+o(1))$.
\end{proof}

\subsection{Applications to Specific Models} \label{sec:examples}

\noindent{\bf \underline{Noiseless model}:} For the noiseless group testing model (\emph{cf.}, \eqref{eq:gt_noiseless_model}), we immediately obtain the following from Theorem \ref{cor:exact} and Lemma \ref{lem:noiseless_conc}.

\begin{cor} \label{cor:exact_noiseless}
    {\em (Noiseless, exact recovery)}
    For the noiseless group testing problem with $\nu = \nusymm$ (\emph{cf.}, \eqref{eq:nu1}) and $k = \Theta(p^{\theta})$ for some $\theta \in (0,1)$, we can achieve $\pe \to 0$ with separate decoding of items provided that
    \begin{multline}
        n \ge \min_{\delta_2 > 0} \max\bigg\{ \frac{k \log p}{ (\log 2)^2 (1-\delta_2) }, \\ \frac{k \log k}{(\log 2)^2 \big( (1-\delta_2)\log(1-\delta_2) + \delta_2 \big)} \bigg\} (1+\eta) \label{eq:noiseless_final}
    \end{multline}
    for some $\eta > 0$.
\end{cor}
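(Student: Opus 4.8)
The plan is to combine Theorem~\ref{cor:exact} with Lemma~\ref{lem:noiseless_conc}, taking the function $\psi_n(\delta_2)$ in \eqref{eq:assump1} to be the right-hand side of \eqref{eq:conc_noiseles}. Fix an arbitrary $\delta_2 \in (0,1)$; I would show that the stated lower bound on $n$ forces both hypotheses \eqref{eq:ex_cond1} and \eqref{eq:ex_cond2} of Theorem~\ref{cor:exact} to hold, so that $\pe \to 0$ follows, and then optimize over $\delta_2$. For \eqref{eq:ex_cond1}, I would choose $\delta_1 \to 0$ slowly enough that $\log\frac{1}{\delta_1} = o(\log p)$ (e.g.\ $\delta_1 = \frac{1}{\log p}$); then, using $k = o(p)$ so that $\log(p-k) = (\log p)(1+o(1))$ together with the identity $I_1 = \frac{(\log 2)^2}{k}(1+o(1))$ from \eqref{eq:mi_noiseless}, the right-hand side of \eqref{eq:ex_cond1} becomes $\frac{k\log p}{(\log 2)^2(1-\delta_2)}(1+o(1))$, which is bounded above by the first term inside the maximum of \eqref{eq:noiseless_final} once the fixed factor $(1+\eta)$ is included and $p$ is large.

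Next I would handle \eqref{eq:ex_cond2}. Plugging \eqref{eq:conc_noiseles} into $k\,\psi_n(\delta_2)$ and taking logarithms, the requirement $k\,\psi_n(\delta_2)\to 0$ is equivalent to $\log k - \frac{n(\log 2)^2}{k}\big((1-\delta_2)\log(1-\delta_2) + \delta_2\big)(1+o(1)) \to -\infty$. The function $g(\delta_2) := (1-\delta_2)\log(1-\delta_2) + \delta_2$ is strictly positive on $(0,1)$, since $g(0) = 0$ and $g'(\delta_2) = -\log(1-\delta_2) > 0$; hence this condition holds whenever $n$ exceeds $\frac{k\log k}{(\log 2)^2\, g(\delta_2)}$ by any fixed multiplicative constant $1+\eta > 1$, as then $k\,\psi_n(\delta_2) \le k\cdot k^{-(1+\eta)(1+o(1))} \to 0$. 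This is precisely the second term inside the maximum of \eqref{eq:noiseless_final}.

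Finally I would combine the two: for each fixed $\delta_2 \in (0,1)$, taking $n$ at least $(1+\eta)$ times the maximum of the two expressions makes \eqref{eq:ex_cond1} and \eqref{eq:ex_cond2} hold simultaneously, so Theorem~\ref{cor:exact} gives $\pe \to 0$ with $\gamma = \log\frac{p-k}{\delta_1}$. Since this is valid for every $\delta_2 \in (0,1)$ (and the first term is vacuous for $\delta_2 \ge 1$), taking the infimum over $\delta_2$ yields \eqref{eq:noiseless_final}. I do not anticipate a genuine obstacle: the only points requiring care are that the $(1+o(1))$ sitting inside the exponent of \eqref{eq:conc_noiseles}, and the gap between $\log(p-k)$ and $\log p$, are both harmlessly absorbed by the fixed slack $\eta > 0$, and that the coupling between the auxiliary sequence $\delta_1$ and $\delta_2$ enters only through $(1+o(1))$ factors.
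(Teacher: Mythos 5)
Your proposal is correct and follows essentially the same route as the paper's proof: it instantiates Theorem~\ref{cor:exact} with $\psi_n(\delta_2)$ equal to the right-hand side of Lemma~\ref{lem:noiseless_conc}, uses $I_1 = \frac{(\log 2)^2}{k}(1+o(1))$ with $\delta_1 \to 0$ slowly to get the first term of \eqref{eq:noiseless_final} from \eqref{eq:ex_cond1}, and rearranges \eqref{eq:ex_cond2} to get the second term. The extra details you supply (positivity of $(1-\delta_2)\log(1-\delta_2)+\delta_2$, absorbing the $(1+o(1))$ factors and the $\log(p-k)$ versus $\log p$ gap into the slack $\eta$) are exactly the ``simple rearranging'' the paper leaves implicit.
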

\begin{proof}
    We know from \eqref{eq:mi_noiseless} that $I_1 = \frac{(\log 2)^2}{k} (1+o(1))$, and hence, the first term in \eqref{eq:noiseless_final} follows from \eqref{eq:ex_cond1} with $\delta_1 \to 0$ sufficiently slowly.  Moreover, by equating $\psi_n(\delta_2)$ with the right-hand side of \eqref{eq:conc_noiseles} and performing simple rearranging in \eqref{eq:ex_cond2}, we obtain the second term in \eqref{eq:noiseless_final}. 
\end{proof}

\noindent {\bf \underline{Symmetric noise model}:} For the symmetric noisy model (\emph{cf.}, \eqref{eq:gt_noiseless_model}), we make use of Lemma \ref{lem:bernstein}, with the constants $\cmean$, $\cvar$ and $\cmax$ therein characterized in the following.  Recall that $H_2$ is the binary entropy function in nats.

\begin{lem} \label{lem:bernstein_noisy}
    {\em (Bernstein parameters for symmetric noise)}
    Under the symmetric noise model with a fixed parameter $\rho \in \big(0,\frac{1}{2}\big)$ (not depending on $p$) and $\nu = \nusymm$ (\emph{cf.}, \eqref{eq:nu1}), we have 
    \begin{equation}
        k\EE[\imath(X_1,Y)] = (\log 2) \big( \log 2 - H_2(\rho) \big) (1+o(1)), \label{eq:mi_symm}
    \end{equation}
    \begin{multline}
        k\var[\imath(X_1,Y)] \le (\log 2)\bigg( (1-\rho) \log^2\big(2(1-\rho)\big) \\ + \rho\log^2(2\rho) \bigg) (1+o(1)), \label{eq:var_symm} 
    \end{multline}
    \begin{equation}
        \max_{x_1,y} \big|\imath(x_1,y)\big| = \log\frac{1}{2\rho} \label{eq:max_symm}
    \end{equation}
    as $p \to \infty$ and $k \to \infty$ simultaneously.
\end{lem}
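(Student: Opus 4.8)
The plan is to follow the same template as the proof of Lemma~\ref{lem:noiseless_conc}: enumerate the joint outcomes of $(X_1,Y)$ together with their probabilities, and then evaluate $\cmean$, $\cvar$ and $\cmax$ by direct computation. Conditioning on item~$1$ being defective, I split into the cases $X_1 = 1$ and $X_1 = 0$. If $X_1 = 1$ (probability $\frac{\nu}{k}$), then $\bigvee_{j \in S} X_j = 1$ deterministically, so $Y = 1 \oplus Z$ and hence $Y = 1$ with probability $1-\rho$ and $Y = 0$ with probability $\rho$. If $X_1 = 0$ (probability $1-\frac{\nu}{k}$), then $\bigvee_{j\in S} X_j$ equals $1$ iff at least one of the remaining $k-1$ defectives is in the test, which by the definition of $\nusymm$ in~\eqref{eq:nu1} has probability $1-\big(1-\frac{\nu}{k}\big)^{k-1} = \frac12 - \frac{\nu}{2k}(1+o(1))$; combining this with the $\Bernoulli(\rho)$ noise, conditioned on $X_1 = 0$ we get $Y = 0$ and $Y = 1$ each with probability $\frac12 + O\!\big(\frac1k\big)$. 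As in the noiseless case, a short computation that uses $\big(1-\frac{\nu}{k}\big)^k = \frac12$ shows that the marginal satisfies $P_Y(0) = P_Y(1) = \frac12$ \emph{exactly}, so $\imath_1(x_1,y) = \log\!\big(2P_{Y|X_1}(y|x_1)\big)$. Its four values are $\log\!\big(2(1-\rho)\big)$ and $\log(2\rho)$ when $X_1 = 1$, and $\pm\frac{\nu}{k}(1-2\rho)(1+o(1))$ when $X_1 = 0$.

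For the mean~\eqref{eq:mi_symm}, I would note that only the $X_1 = 1$ branch contributes at order $\frac1k$: it contributes $\frac{\nu}{k}\big[(1-\rho)\log(2(1-\rho)) + \rho\log(2\rho)\big] = \frac{\nu}{k}\big(\log 2 - H_2(\rho)\big)$, whereas the $X_1 = 0$ branch contributes only $O\!\big(\frac1{k^2}\big)$ because the two $\Theta\!\big(\frac1k\big)$ values of $\imath_1$ there are weighted by the nearly equal probabilities $\frac12 + O\!\big(\frac1k\big)$ and cancel at leading order. Multiplying by $k$ and using $\nu = \nusymm = (\log 2)(1+o(1))$ from~\eqref{eq:nu2} yields~\eqref{eq:mi_symm}. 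For the variance~\eqref{eq:var_symm}, I would use $\var[\imath_1(X_1,Y)] \le \EE[\imath_1(X_1,Y)^2]$, note that $\big(\EE[\imath_1(X_1,Y)]\big)^2 = O\!\big(\frac1{k^2}\big)$ is negligible anyway, and observe that (again) only the $X_1 = 1$ branch matters at order $\frac1k$, giving $\EE[\imath_1(X_1,Y)^2] = \frac{\nu}{k}\big[(1-\rho)\log^2(2(1-\rho)) + \rho\log^2(2\rho)\big] + O\!\big(\frac1{k^2}\big)$; multiplying by $k$ and substituting $\nu = (\log 2)(1+o(1))$ gives the claimed bound (this also feeds straight into Lemma~\ref{lem:bernstein}).

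For $\cmax$ in~\eqref{eq:max_symm}, the candidate values are $|\log(2(1-\rho))| = \log\!\big(2(1-\rho)\big)$, $|\log(2\rho)| = \log\frac{1}{2\rho}$, and two quantities of order $\frac1k$ that are negligible once $k$ is large (with $\rho$ fixed). Since $\frac{1}{2\rho}\big/\big(2(1-\rho)\big) = \frac{1}{4\rho(1-\rho)} > 1$ for every $\rho \ne \frac12$ --- equivalently $(1-2\rho)^2 > 0$ --- we have $\log\frac{1}{2\rho} > \log\!\big(2(1-\rho)\big)$, so the maximum equals $\log\frac{1}{2\rho}$, as claimed.

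The only delicate point is the bookkeeping in the $X_1 = 0$ branch: there each of $\imath_1(0,0)$ and $\imath_1(0,1)$ is $\Theta\!\big(\frac1k\big)$, so I must check carefully that, because $P_{Y|X_1}(0|0)$ and $P_{Y|X_1}(1|0)$ are each $\frac12 + O\!\big(\frac1k\big)$, their contributions to the mean and to the second moment cancel down to $O\!\big(\frac1{k^2}\big)$ and hence do not affect the leading constants. Everything else is a routine substitution entirely parallel to the noiseless computation in Lemma~\ref{lem:noiseless_conc}.
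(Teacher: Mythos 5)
Your proposal is correct and follows essentially the same route as the paper's proof: enumerate the four $(X_1,Y)$ outcomes with their probabilities, use $P_Y(0)=P_Y(1)=\tfrac12$ (exact by the definition of $\nusymm$) to write $\imath_1(x_1,y)=\log\big(2P_{Y|X_1}(y|x_1)\big)$, observe that the $X_1=1$ branch supplies the leading $\Theta(1/k)$ contributions to the mean and second moment while the $X_1=0$ branch contributes only $O(1/k^2)$, and compare $\log\frac{1}{2\rho}$ with $\log\big(2(1-\rho)\big)$ for the maximum. The only (harmless) deviation is bounding the variance by the second moment rather than computing $\EE[(\imath_1-I_1)^2]$ directly, which yields the same asymptotic bound.
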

\begin{proof}
    We begin by looking at the various possible outcomes of $(X_1,Y)$ and their probabilities, as well as the resulting values of $\imath_1(X_1,Y)$.  Since $P_Y(0) = P_Y(1) = \frac{1}{2}$ by the definition of $\nusymm$, the information density simplifies to $\imath_1(x_1,y) = \log\big( 2P_{Y|X_1}(y|x_1) \big)$, and we have the following:
    \begin{itemize}
        \item $X_1 = 1$ with probability $\frac{\nu}{k}$, and conditioned on this event, we have the following:
        \begin{itemize}
            \item $Y = 1$ with probability $1 - \rho$, and in this case we have $\imath_1 = \log\big(2(1-\rho)\big)$.
            \item $Y = 0$ with probability $\rho$, and in this case we have $\imath_1 = \log(2\rho)$.
        \end{itemize}
        \item $X_1 = 0$ with probability $1 - \frac{\nu}{k}$, and conditioned on this event, we have the following:
        \begin{itemize}
            \item $Y = 0$ with probability $(1-\rho)\cdot\frac{1}{2}(1+\xi) + \rho\cdot\frac{1}{2}(1-\xi) = \frac{1}{2}\big(1 + (1-2\rho)\xi\big)$, where $\xi = \frac{\nu/k}{1-\nu/k}$ as derived following \eqref{eq:conc_noiseles}.  Hence, in this case, we have $\imath_1 = \log\big(1 + (1-2\rho)\xi\big) = \frac{(1-2\rho)\nu}{k} (1+o(1))$.
            \item $Y = 1$ with probability $(1-\rho)\cdot\frac{1}{2}(1-\xi) + \rho\cdot\frac{1}{2}(1+\xi) = \frac{1}{2}\big(1 - (1-2\rho)\xi\big)$, and in this case, we have $\imath_1 = \log\big(1 - (1-2\rho)\xi\big) = -\frac{(1-2\rho)\nu}{k} (1+o(1))$.
        \end{itemize}
    \end{itemize}
    With these computations in place, the lemma follows easily by evaluating the expectation $\EE[\imath]$, variance $\EE[(\imath - I_1)^2]$, and maximum $\max[\imath]$ directly, and substituting $\nu = \nusymm = (\log 2)(1+o(1))$.  We briefly outline the details:
    \begin{itemize}
        \item For the mean, the contributions corresponding to $X_1 = 1$ already give the right-hand side of \eqref{eq:mi_symm}, whereas for $X_1 = 0$, the terms corresponding to $Y=0$ and $Y=1$ effectively cancel, i.e., their sum is $o\big(\frac{1}{k}\big)$.
        \item For the variance, the contributions corresponding to $X_1 = 1$ already give the right-hand side of \eqref{eq:var_symm}, whereas the contributions from the sub-cases of $X_1 = 0$ are $O\big(\frac{1}{k^2}\big)$.
        \item For the maximum, we use the fact that $\log\frac{1}{2\rho} \ge \log\big(2(1-\rho)\big)$ for all $\rho \in \big(0,\frac{1}{2}\big)$. 
    \end{itemize}
\end{proof}

From this lemma, we immediately obtain the following.

\begin{cor} \label{cor:exact_symmetric}
    {\em (Symmetric noise, exact recovery)}
    For noisy group testing with $\rho \in \big(0,\frac{1}{2}\big)$ (not depending on $p$), $\nu = \nusymm$, and $k = \Theta(p^{\theta})$ for some $\theta \in (0,1)$, we can achieve $\pe \to 0$ with separate decoding of items provided that 
    \begin{multline}
        n \ge \min_{\delta_2 > 0} \max\bigg\{ \frac{k \log p}{ (\log 2)(\log 2 - H_2(\rho)) (1-\delta_2) }, \\ \frac{(k \log k)\cdot\big(\cvar + \frac{1}{3}\cmean\cmax\delta_2\big) }{\frac{1}{2} \cmean^2 \delta_2^2} \bigg\} (1+\eta) \label{eq:symm_final}
    \end{multline}
    for some $\eta > 0$, where $\cmean$, $\cvar$, and $\cmax$ are respectively given by the right-hand sides of \eqref{eq:mi_symm}--\eqref{eq:max_symm}.
\end{cor}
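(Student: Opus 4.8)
The plan is to apply Theorem~\ref{cor:exact} with the concentration function $\psi_n$ supplied by the one-sided consequence of Lemma~\ref{lem:bernstein}, and then to simplify the two resulting hypotheses \eqref{eq:ex_cond1}--\eqref{eq:ex_cond2} into the two terms of the maximum in \eqref{eq:symm_final} using the parameter values identified in Lemma~\ref{lem:bernstein_noisy}.

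First I would record that the Bernstein bound \eqref{eq:bern1} immediately yields the one-sided estimate
\[
    \PP[\imath_1^n(\Xv_1,\Yv) \le nI_1(1-\delta_2)] \le 2\exp\bigg( \frac{ -\frac{1}{2}\cdot\frac{n}{k}\cdot\cmean^2\delta_2^2 }{ \cvar + \frac{1}{3}\cmean\cmax\delta_2 } \bigg),
\]
so that \eqref{eq:assump1} holds with $\psi_n(\delta_2)$ equal to this right-hand side; here $\cvar$ may be taken as the upper bound in \eqref{eq:var_symm} rather than the exact value $k\var[\imath_1(X_1,Y)]$, since inflating $\cvar$ only enlarges $\psi_n$ and hence keeps \eqref{eq:assump1} valid. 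Lemma~\ref{lem:bernstein_noisy} then gives $\cmean = kI_1 = (\log 2)(\log 2 - H_2(\rho))(1+o(1))$ together with the stated expressions for $\cvar$ and $\cmax$. I would fix $\delta_2>0$ as a constant and treat $\cmean,\cvar,\cmax$ as their limiting (constant) values, pushing all $(1+o(1))$ discrepancies into the final slack $\eta$.

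For hypothesis \eqref{eq:ex_cond1}, choosing $\delta_1\to 0$ slowly enough that $\log\frac{1}{\delta_1}=o(\log p)$ gives $\log\big(\frac{1}{\delta_1}(p-k)\big)=(\log p)(1+o(1))$ since $k=o(p)$; substituting $I_1 = \frac{(\log 2)(\log 2 - H_2(\rho))}{k}(1+o(1))$ then turns \eqref{eq:ex_cond1} into the first term of the maximum in \eqref{eq:symm_final}. For hypothesis \eqref{eq:ex_cond2}, taking logarithms shows that $k\psi_n(\delta_2)\to 0$ is equivalent to $\log k + \log 2 - \frac{n}{k}\cdot\frac{\cmean^2\delta_2^2}{2(\cvar + \frac{1}{3}\cmean\cmax\delta_2)}\to-\infty$, which holds whenever $n\ge \frac{k\log k\,(\cvar + \frac{1}{3}\cmean\cmax\delta_2)}{\frac{1}{2}\cmean^2\delta_2^2}(1+\eta)$ for a fixed $\eta>0$: the additive $\log 2$ and the $(1+o(1))$ slack in $\cmean,\cvar$ are dominated by the $\eta\,k\log k$ surplus because $\psi_n$ decays exponentially in $n/k$ while its prefactor is only $O(1)$. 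Combining the two requirements via a maximum and then optimizing the free constant $\delta_2>0$ via an infimum yields \eqref{eq:symm_final}, whereupon Theorem~\ref{cor:exact} delivers $\pe\to0$ with $\gamma = \log\frac{p-k}{\delta_1}$.

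I do not anticipate a genuine obstacle: the substantive estimates already reside in Theorem~\ref{cor:exact} and Lemmas~\ref{lem:bernstein}--\ref{lem:bernstein_noisy}, so what remains is bookkeeping essentially identical to the proof of Corollary~\ref{cor:exact_noiseless}. The only points requiring mild care are (i) confirming that the exponential decay of $\psi_n$ in $n/k$ dominates both its $O(1)$ prefactor and the extra factor $k$ in \eqref{eq:ex_cond2}, so that a fixed $\eta>0$ of slack is enough, and (ii) letting $\delta_1\to0$ slowly enough that it does not inflate the first term of \eqref{eq:symm_final}.
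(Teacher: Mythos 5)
Your proposal is correct and follows exactly the route the paper intends: the paper derives Corollary \ref{cor:exact_symmetric} "immediately" from Theorem \ref{cor:exact} with $\psi_n$ supplied by Bernstein's inequality (Lemma \ref{lem:bernstein}) and the constants of Lemma \ref{lem:bernstein_noisy}, with the same bookkeeping you describe (taking $\delta_1\to 0$ slowly for the first term and rearranging $k\psi_n(\delta_2)\to 0$ for the second). Your remarks on using the variance upper bound in place of the exact $\cvar$ and on absorbing the $(1+o(1))$ factors into $\eta$ are the only points of care, and you handle them correctly.
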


We have focused on the case $\nu = \nusymm$ to simplify the analysis and establish an explicit $\log 2$ gap to joint decoding as $\theta \to 0$ (in which case, the first term of \eqref{eq:symm_final} dominates for arbitrarily small $\delta_2$).  However, we show in Appendix \ref{sec:pf_nu} that this choice can be suboptimal even as $\theta \to 0$, and that more generally we obtain the sufficient condition $n \ge \frac{ k\log p }{ \nu D_2(\rho \| \rho \star e^{-\nu}) } (1 + \eta)$ in this limit, where $a \star b = a(1-b) + b(1-a)$, and $D_2(a\|b) = a\log\frac{a}{b} + (1-a)\log\frac{1-a}{1-b}$ is the binary KL divergence function.  This turns out to marginally improve on the condition $n \ge \frac{ k\log p }{ (\log 2)(\log 2 - H_2(\rho)) } (1 + \eta)$ obtained via the specific choice $\nu = \nusymm = (\log 2)(1+o(1))$. \medskip

\noindent{\bf \underline{Other noise models}:} We showed above how to apply Lemma \ref{lem:bernstein} to the symmetric noise model.  However, it can also be applied more generally, yielding an analogous result for any model in which the quantities $\cmean$, $\cvar$, and $\cmax$ in \eqref{eq:cmean}--\eqref{eq:cmax} behave as $\Theta(1)$.  In particular, for any such model and any fixed $\nu > 0$, in the limit as $\theta \to 0$, it suffices to have
\begin{equation}
    n \ge \frac{\log p}{I_1} (1+\eta) = \frac{k \log p}{\cmean} (1+\eta), \label{eq:general}
\end{equation}
for arbitrarily small $\eta > 0$.  In contrast, for $\theta$ strictly greater than zero,  the conditions on $n$ resulting from Bernstein's inequality may dominate \eqref{eq:general}, similarly to Corollary \ref{cor:exact_symmetric}.

In the following section, we show that when we move to {\em partial recovery}, it is possible to circumvent the difficulties of the concentration bounds for $\theta > 0$, and to derive sufficient conditions of the form \eqref{eq:general} valid {\em for all $\theta \in (0,1)$}, even when the term $k \log p$ is improved to $k \log \frac{p}{k}$.

\section{Achievability Results with Partial Recovery} \label{sec:ach_partial}

In this section, we show that the analysis of the previous section can easily be adapted to provide achievability results when a certain number false positives $\dpos$ and/or false negatives $\dneg$ are allowed.  We make use of the notation from Sections \ref{sec:setup} and \ref{sec:additional}.

The main tool we need is the following, whose proof is in fact implicit in our analysis for the exact recovery criterion.

\begin{lem} \label{lem:indiv}
    {\em (Auxiliary result for partial recovery)}
    For any group testing model of the form \eqref{eq:obs_gen}, under the decoder in \eqref{eq:separate_dec} with threshold $\gamma > 0$, we have the following:
    
    (i) For any $j \notin S$, the probability of passing the threshold test is upper bounded by $e^{-\gamma}$.
    
    (ii) Suppose that the information density satisfies a concentration inequality of the form \eqref{eq:assump1} for some function $\psi_n(\delta_2)$, and that the number of tests satisfies $n \ge \frac{\gamma}{ I_1 (1-\delta_2) }$.  Then for any $j \in S$, the probability of failing the threshold test is upper bounded by $\psi_n(\delta_2)$.
\end{lem}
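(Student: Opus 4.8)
The plan is to recognise that both parts are already contained, piece by piece, in the proof of Theorem~\ref{thm:exact}; all that is needed is to isolate the two relevant marginal laws of the pair $(\Xv_j,\Yv)$ and invoke the arguments there.

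For part~(i), when $j\notin S$ the column $\Xv_j$ plays no role in the observation model~\eqref{eq:obs_gen}, so $(\Xv_j,\Yv)\sim P_X^n(\xv_j)\,P_Y^n(\yv)$ with the two factors independent --- exactly the law of $(\Xvbar_1,\Yv)$ used for the non-defective union bound in Theorem~\ref{thm:exact}. The event that item~$j$ passes the threshold test~\eqref{eq:separate_dec} is $\{\imath_1^n(\Xv_j,\Yv)>\gamma\}$, and repeating the change-of-measure chain~\eqref{eq:term2_2}--\eqref{eq:term2_3} verbatim (bound $P_Y^n(\yv)$ using the event inside the indicator, then drop the indicator) yields probability at most $e^{-\gamma}$.

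For part~(ii), when $j\in S$ the column $\Xv_j$ is i.i.d.\ $\Bernoulli(\nu/k)$ by construction, and since~\eqref{eq:obs_gen} makes $Y^{(i)}$ depend on $X^{(i)}$ only through $N(S,X^{(i)})$, integrating out the remaining defective columns leaves, for each test $i$ independently, $Y^{(i)}\mid X_j^{(i)}\sim P_{Y|X_1}(\cdot\mid X_j^{(i)})$; hence $(\Xv_j,\Yv)\sim P_X^n(\xv_j)\,P^n_{Y|X_1}(\yv\mid\xv_j)$, which is precisely the distribution under which the hypothesised concentration bound~\eqref{eq:assump1} is stated. The probability that item~$j$ fails the threshold test is $\PP[\imath_1^n(\Xv_j,\Yv)\le\gamma]$; since $n\ge\gamma/(I_1(1-\delta_2))$ is equivalent to $\gamma\le nI_1(1-\delta_2)$, this is at most $\PP[\imath_1^n(\Xv_j,\Yv)\le nI_1(1-\delta_2)]\le\psi_n(\delta_2)$ by~\eqref{eq:assump1}. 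There is no genuine obstacle here; the only point warranting care is the marginalisation step in part~(ii) --- verifying that conditioning on one defective column and averaging over the rest reproduces $P_{Y|X_1}$ test-by-test --- which is immediate from the structure of~\eqref{eq:obs_gen}.
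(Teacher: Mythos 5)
Your proposal is correct and follows essentially the same route as the paper, whose proof of this lemma simply points back to the change-of-measure bound \eqref{eq:term2_2}--\eqref{eq:term2_3} for part (i) and to the argument in the proof of Theorem \ref{cor:exact} for part (ii); you have merely made explicit the marginal laws of $(\Xv_j,\Yv)$ in the two cases, which the paper leaves implicit via its symmetry reduction to $S=\{1,\dotsc,k\}$.
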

\begin{proof}
    The first part was shown in \eqref{eq:term2_3}, and the second part is implicit in the proof of Theorem \ref{cor:exact}.
\end{proof}

\subsection{General Partial Recovery Achievability Results}

We proceed by giving three variations of Theorem \ref{cor:exact} for the case that way may tolerate false positives, false negatives, or both.  We focus on the case that the number of false negatives and/or false positives is $\Theta(k)$, but we note that the implied constant in the $\Theta(\cdot)$ notation may be arbitrarily small.

We begin with the case that only false positives are allowed.  This setting is closely related to that of {\em list decoding}, which was studied in \cite{Dya81,Dya15,Sca17,Sha15a}.  Recall that asymptotic notation such as $\to,o(\cdot),O(\cdot)$ is with respect to $p \to \infty$, and we assume that $k \to \infty$ with $k = o(p)$.

\begin{thm} \label{cor:partial_fp} {\em (Asymptotic bound, false positives only)}
    Consider the group testing problem with $\dpos = \alphapos k$ for some $\alphapos > 0$ (not depending on $p$) and $\dneg = 0$, and suppose that the information density satisfies a concentration inequality of the form \eqref{eq:assump1} for some $\psi_n(\delta_2)$.  Moreover, suppose that the following conditions hold for some $\delta_1 \to 0$ and $\delta_2 > 0$:
        \begin{gather}
            n \ge \frac{ \log\big( \frac{1}{\delta_1} \cdot\frac{p-k}{k} \big) }{ I_1 (1 - \delta_2) }, \label{eq:ex_cond1_fp} \\
            k \cdot \psi_n(\delta_2) \to 0. \label{eq:ex_cond2_fp}
        \end{gather}
        Then under the decoder in \eqref{eq:separate_dec} with $\gamma = \log\big(\frac{1}{\delta_1}\cdot\frac{p-k}{k}\big)$, we have $\pe(\dpos,0) \to 0$.
\end{thm}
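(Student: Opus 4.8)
The plan is to split the error event $\{|\hat{S}\backslash S| > \dpos\} \cup \{|S\backslash\hat{S}| > \dneg\}$, which for this theorem has $\dneg = 0$, into a ``missed defective'' part and a ``too many false positives'' part, and to bound each using the two statements of Lemma \ref{lem:indiv}. First I would fix $\gamma = \log\big(\frac{1}{\delta_1}\cdot\frac{p-k}{k}\big)$ as prescribed, and observe that condition \eqref{eq:ex_cond1_fp} is precisely $n \ge \frac{\gamma}{I_1(1-\delta_2)}$, so Lemma \ref{lem:indiv}(ii) applies: each defective $j \in S$ fails the threshold test with probability at most $\psi_n(\delta_2)$. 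Since $\dneg = 0$, a union bound over the $k$ defectives shows that the probability the decoder misses at least one defective is at most $k\psi_n(\delta_2)$, which vanishes by \eqref{eq:ex_cond2_fp}.

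For the false positives, let $\Nneg := |\hat{S}\backslash S|$ denote the number of non-defective items that pass the threshold test. By Lemma \ref{lem:indiv}(i), each fixed $j \notin S$ passes with probability at most $e^{-\gamma}$, so linearity of expectation gives $\EE[\Nneg] \le (p-k)e^{-\gamma} = \delta_1 k$ by the chosen value of $\gamma$. The key point is that no independence across the non-defective decisions is needed here --- they all share the common observation vector $\Yv$, but Markov's inequality only involves the expectation --- so $\PP[\Nneg > \dpos] = \PP[\Nneg > \alphapos k] \le \frac{\EE[\Nneg]}{\alphapos k} \le \frac{\delta_1}{\alphapos}$, which tends to zero since $\delta_1 \to 0$ and $\alphapos$ is a fixed positive constant.

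Combining the two parts by a union bound yields $\pe(\dpos,0) \le k\psi_n(\delta_2) + \frac{\delta_1}{\alphapos} \to 0$, completing the argument. I do not anticipate a genuine obstacle: the only conceptual step is to resist seeking a concentration bound for $\Nneg$ (which would force one to confront the dependence through $\Yv$) and instead realize that the $\Theta(k)$ slack in $\dpos$ makes a first-moment/Markov bound amply sufficient. Everything else is a rewriting of the exact-recovery analysis behind Theorem \ref{cor:exact}, with the term $(p-k)e^{-\gamma}$ from Theorem \ref{thm:exact} replaced by its $\delta_1 k$ analogue arising from the extra factor $\frac{1}{k}$ inside $\gamma$.
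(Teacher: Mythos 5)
Your proof is correct and follows essentially the same route as the paper: Lemma \ref{lem:indiv}(ii) with a union bound over the $k$ defectives to rule out false negatives, and Lemma \ref{lem:indiv}(i) with linearity of expectation plus Markov's inequality (giving $\EE[\text{\# false positives}] \le \delta_1 k$) to control the false positives. The only difference is cosmetic notation (you write $\Nneg$ for what the paper calls $\Npos$), so there is nothing to add.
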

\begin{proof}
    By applying the second part of Lemma \ref{lem:indiv}, and following the proof of \eqref{eq:ex_cond2} in Theorem \ref{cor:exact}, we find that the probability of one or more false negatives tends to zero when \eqref{eq:ex_cond2_fp} holds. 
    
    Let $\Npos$ denote the (random) number of false positives.  Setting $\gamma = \log\big(\frac{1}{\delta_1}\cdot\frac{p-k}{k}\big)$ in the first part of Lemma \ref{lem:indiv}, we obtain $\EE[\Npos] \le (p-k) e^{-\gamma} = k\delta_1$.  By the assumption $\delta_1 \to 0$, it follows that $\EE[\Npos] = o(k)$.  Hence, by Markov's inequality, the probability of $\Npos > \alphapos k$ must vanish for any $\alphapos = \Theta(1)$.
\end{proof}

The main difference in Theorem \ref{cor:partial_fp} compared to Theorem \ref{thm:exact} is that $\log(p-k)$ is replaced by $\log\frac{p-k}{k}$ in the numerator.  While this may not appear to be a drastic change, it can lead to visible improvements (\emph{cf.}, Figure \ref{fig:rates}), particularly for moderate to large values of $\theta$.

Next, we consider the case that only false negatives are allowed, i.e., $\dpos = 0$ and $\dneg > 0$.

\begin{thm} \label{cor:partial_fn} {\em (Asymptotic bound, false negatives only)}
    Consider the group testing problem with $\dneg = \alphaneg k$ for some $\alphaneg \in (0,1)$ (not depending on $p$) and $\dpos = 0$, and suppose that the information density satisfies a concentration inequality of the form \eqref{eq:assump1} for some $\psi_n(\delta_2)$.  Moreover, suppose that the following conditions hold for some $\delta_1 \to 0$ and $\delta_2 > 0$:
        \begin{gather}
            n \ge \frac{ \log\big( \frac{1}{\delta_1} \cdot (p-k) \big) }{ I_1 (1 - \delta_2) }, \label{eq:ex_cond1_fn} \\
            \psi_n(\delta_2) \to 0. \label{eq:ex_cond2_fn}
        \end{gather}
        Then under the decoder in \eqref{eq:separate_dec} with $\gamma = \log\frac{p-k}{\delta_1}$, we have $\pe(0,\dneg) \to 0$.
\end{thm}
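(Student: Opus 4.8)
The plan is to combine the false-positive control from the exact-recovery analysis (Theorems~\ref{thm:exact} and~\ref{cor:exact}) with a first-moment (Markov) bound on the false negatives, the latter being analogous to the argument used for false positives in Theorem~\ref{cor:partial_fp}. Since $\dpos = 0$, every one of the $p-k$ non-defective items must fail the threshold test, whereas since $\dneg = \alphaneg k = \Theta(k)$, only a constant fraction of the $k$ defectives is required to pass it.

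First I would bound the probability of any false positive. By part~(i) of Lemma~\ref{lem:indiv}, each $j \notin S$ passes the threshold test with probability at most $e^{-\gamma}$, so a union bound over the $p-k$ non-defectives shows that the probability of one or more false positives is at most $(p-k)e^{-\gamma} = \delta_1$ for the choice $\gamma = \log\frac{p-k}{\delta_1}$. Since $\delta_1 \to 0$ by hypothesis, this contribution vanishes.

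Next I would control the false negatives. Condition~\eqref{eq:ex_cond1_fn} is exactly the requirement $n \ge \frac{\gamma}{I_1(1-\delta_2)}$ with $\gamma = \log\frac{p-k}{\delta_1}$, so part~(ii) of Lemma~\ref{lem:indiv} applies and each $j \in S$ fails the threshold test with probability at most $\psi_n(\delta_2)$. Writing $\Nneg$ for the random number of false negatives, linearity of expectation gives $\EE[\Nneg] \le k\,\psi_n(\delta_2)$, and Markov's inequality then yields $\PP[\Nneg > \alphaneg k] \le \psi_n(\delta_2)/\alphaneg$, which tends to zero by~\eqref{eq:ex_cond2_fn} since $\alphaneg$ is a fixed positive constant. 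A final union bound over the false-positive and false-negative events gives $\pe(0,\dneg) \le \delta_1 + \psi_n(\delta_2)/\alphaneg \to 0$.

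I do not anticipate any genuine difficulty; the argument is essentially a reorganization of the exact-recovery proof, so the ``main obstacle'' is really just bookkeeping. The one point worth emphasizing is that tolerating $\Theta(k)$ false negatives relaxes the concentration requirement from $k\,\psi_n(\delta_2)\to 0$ (needed for exact recovery in Theorem~\ref{cor:exact}, and for the false negatives in Theorem~\ref{cor:partial_fp}) down to merely $\psi_n(\delta_2)\to 0$, which is precisely what can render the concentration-driven term in the number of tests negligible and is the payoff of partial recovery in this direction.
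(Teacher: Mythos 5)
Your proposal is correct and follows essentially the same route as the paper's own proof: Lemma~\ref{lem:indiv}(i) plus a union bound over the $p-k$ non-defectives to rule out any false positive, and Lemma~\ref{lem:indiv}(ii) plus linearity of expectation and Markov's inequality to show $\PP[\Nneg > \alphaneg k] \to 0$ under $\psi_n(\delta_2) \to 0$. If anything, your write-up is slightly cleaner, since it consistently uses the threshold $\gamma = \log\frac{p-k}{\delta_1}$ from the theorem statement throughout.
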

\begin{proof}
    Applying the first part of Lemma \ref{lem:indiv} along with the union bound over the $p-k$ non-defectives and the choice $\gamma = \log\frac{p-k}{\delta_1}$, we find that the probability of one or more false negatives is at most $\delta_1$, which vanishes by assumption.
    
    Let $\Nneg$ denote the (random) number of false negatives.  Setting $\gamma = \log\big(\frac{1}{\delta_1}\cdot\frac{p-k}{k}\big)$ in the second part of Lemma \ref{lem:indiv}, we obtain $\EE[\Nneg] \le k\psi_n(\delta_2)$.  By the assumption $\psi_n(\delta_2) \to 0$, it follows that $\EE[\Nneg] = o(k)$.  Hence, by Markov's inequality, $\PP[\Nneg > \alphaneg k]$ must vanish for any $\alphaneg = \Theta(1)$.
\end{proof}

By allowing false negatives, we obtain a significantly milder condition on $\psi_n$ in \eqref{eq:ex_cond2_fn}, corresponding to the concentration of $\imath_1^n(\Xv_1,\Yv)$.  Specifically, all we need is concentration about the mean at an arbitrarily slow rate, whereas in Theorem \ref{cor:exact} we needed a rate of $O\big(\frac{1}{k}\big)$.  This turns out to significantly reduce the required number of tests for moderate to large values of $\theta$; see Corollary \ref{cor:noiseless_partial} below, as well as Figure \ref{fig:rates}.

Finally, we consider the case that both false positives and false negatives are allowed.

\begin{thm} \label{cor:partial_both} {\em (Asymptotic bound, false positives and false negatives)}
    Consider the group testing problem with $\dpos = \alphapos k$ and $\dneg = \alphaneg k$ for some $\alphapos > 0$ and $\alphaneg \in (0,1)$ (not depending on $p$), and suppose that the information density satisfies a concentration inequality of the form \eqref{eq:assump1} for some function $\psi_n(\delta_2)$.  Moreover, suppose that the following conditions hold for some $\delta_1 \to 0$ and $\delta_2 > 0$:
        \begin{gather}
            n \ge \frac{ \log\big( \frac{1}{\delta_1} \cdot \frac{p-k}{k} \big) }{ I_1 (1 - \delta_2) }, \label{eq:ex_cond1_both} \\
            \psi_n(\delta_2) \to 0. \label{eq:ex_cond2_both}
        \end{gather}
        Then under the decoder in \eqref{eq:separate_dec} with $\gamma = \log\frac{p-k}{\delta_1}$, we have $\pe(\dpos,\dneg) \to 0$.
\end{thm}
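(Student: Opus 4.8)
The plan is to reduce the claim to two essentially independent ``expected-count plus Markov'' arguments, one for false positives and one for false negatives, exactly paralleling the proofs of Theorems~\ref{cor:partial_fp} and~\ref{cor:partial_fn}. Writing $\Npos = |\hat{S} \setminus S|$ and $\Nneg = |S \setminus \hat{S}|$ for the (random) numbers of false positives and false negatives, the definition~\eqref{eq:pe_partial} together with the union bound gives $\pe(\dpos,\dneg) \le \PP[\Npos > \alphapos k] + \PP[\Nneg > \alphaneg k]$, so it suffices to show that each of these two probabilities tends to $0$.

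For the false positives, I would apply part~(i) of Lemma~\ref{lem:indiv}: under the decoder~\eqref{eq:separate_dec} with threshold $\gamma$, every $j \notin S$ is declared defective with probability at most $e^{-\gamma}$, so by linearity of expectation $\EE[\Npos] \le (p-k)e^{-\gamma}$. For the stated $\gamma$ (of order $\log\big(\frac{p-k}{k\delta_1}\big)$) this is $\Theta(k\delta_1)$, hence $o(k)$ because $\delta_1 \to 0$; Markov's inequality then gives $\PP[\Npos > \alphapos k] \le \EE[\Npos]/(\alphapos k) \to 0$ for any fixed $\alphapos > 0$.

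For the false negatives, I would first note that the hypothesis~\eqref{eq:ex_cond1_both} is (up to the $\log k$ bookkeeping mentioned below) exactly the condition $n \ge \gamma/(I_1(1-\delta_2))$ needed to invoke part~(ii) of Lemma~\ref{lem:indiv}, which then bounds the probability that any fixed $j \in S$ is missed by $\psi_n(\delta_2)$. Hence $\EE[\Nneg] \le k\psi_n(\delta_2)$, which is $o(k)$ by~\eqref{eq:ex_cond2_both}, and Markov's inequality gives $\PP[\Nneg > \alphaneg k] \le \psi_n(\delta_2)/\alphaneg \to 0$ for any fixed $\alphaneg \in (0,1)$. Combining the two bounds yields $\pe(\dpos,\dneg) \to 0$.

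Given Lemma~\ref{lem:indiv}, the argument is routine; the only thing that genuinely changes relative to Theorems~\ref{thm:exact}, \ref{cor:exact} and~\ref{cor:partial_fp} is that, since we now allow $\Theta(k)$ errors of \emph{each} type, we only need $\EE[\Npos] = o(k)$ and $\EE[\Nneg] = o(k)$ rather than $o(1)$. This is why the requirement on the concentration function relaxes all the way to $\psi_n(\delta_2) \to 0$ in~\eqref{eq:ex_cond2_both} (concentration of $\imath_1^n$ about its mean at an arbitrarily slow rate), in contrast to $k\psi_n(\delta_2)\to 0$ for exact recovery. The one point I would be careful about is keeping the displayed choice of $\gamma$ consistent with the $n$-condition~\eqref{eq:ex_cond1_both} up to the $\log k$ factor, so that part~(ii) of Lemma~\ref{lem:indiv} applies verbatim; I do not anticipate a real obstacle here.
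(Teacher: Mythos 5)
Your proposal is correct and follows essentially the same route as the paper, which deduces the result directly from the proofs of Theorems \ref{cor:partial_fp} and \ref{cor:partial_fn}: bound $\EE[\Npos]$ by $(p-k)e^{-\gamma}$ via part (i) of Lemma \ref{lem:indiv}, bound $\EE[\Nneg]$ by $k\psi_n(\delta_2)$ via part (ii), and finish with Markov's inequality separately for each error type. Your closing caveat is also well placed: for part (ii) to apply under \eqref{eq:ex_cond1_both} one should take $\gamma = \log\big(\tfrac{1}{\delta_1}\cdot\tfrac{p-k}{k}\big)$ as in Theorem \ref{cor:partial_fp} (the displayed $\gamma = \log\tfrac{p-k}{\delta_1}$ in the statement appears to be carried over from Theorem \ref{cor:partial_fn}), and with that choice the false-positive count still satisfies $\EE[\Npos] \le k\delta_1 = o(k)$, so nothing in the argument breaks.
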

\begin{proof}
    This result is directly deduced from the proofs of Theorems \ref{cor:partial_fp} and \ref{cor:partial_fn}.
\end{proof}

As we show in Corollary \ref{cor:gen_partial} below, this result leads to a broad class of noise models where the threshold $n^* = \frac{\log\frac{p}{k}}{I_1} (1+o(1))$ can be achieved {\em for all $\theta \in (0,1)$} with partial recovery.

\subsection{Applications to Specific Models}

\noindent{\bf \underline{Noiseless model}:} The following corollary gives three variations of the result in Corollary \ref{cor:exact_noiseless} corresponding to the three partial recovery settings considered in the previous subsection.

\begin{cor} \label{cor:noiseless_partial}  {\em (Noiseless model, partial recovery)}
    For the noiseless group testing problem with $\nu = \nusymm$ (\emph{cf.}, \eqref{eq:nu1}) and $k = \Theta(p^{\theta})$ for some $\theta \in (0,1)$, we can achieve $\pe(\dpos,\dneg) \to 0$ with separate decoding of items under any of the following conditions:
    
    (i) $\dpos = \Theta(k)$, $\dneg = 0$, and
        \begin{multline}
            n \ge \min_{\delta_2 > 0} \max\bigg\{ \frac{k \log \frac{p}{k}}{ (\log 2)^2 (1-\delta_2) }, \\ \frac{k \log k}{(\log 2)^2 \big( (1-\delta_2)\log(1-\delta_2) + \delta_2 \big)} \bigg\} (1+\eta) \label{eq:noiseless_final_fp}
        \end{multline}
        for some $\eta > 0$.
        
    (ii) $\dpos = 0$, $\dneg = \Theta(k)$, and
        \begin{equation}
            n \ge \frac{k \log p}{ (\log 2)^2 } (1+\eta) \label{eq:noiseless_final_fn}
        \end{equation}
        for some $\eta > 0$.
        
    (iii) $\dpos = \Theta(k)$, $\dneg = \Theta(k)$, and
        \begin{equation}
            n \ge \frac{k \log \frac{p}{k}}{ (\log 2)^2 } (1+\eta) \label{eq:noiseless_final_both}
        \end{equation}
        for some $\eta > 0$.
\end{cor}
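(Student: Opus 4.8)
The plan is to feed the three general partial-recovery theorems of the previous subsection (Theorems~\ref{cor:partial_fp}, \ref{cor:partial_fn}, and~\ref{cor:partial_both}) with the concentration function $\psi_n(\delta_2)$ supplied by Lemma~\ref{lem:noiseless_conc} and the mutual information value $I_1 = \frac{(\log 2)^2}{k}(1+o(1))$ from~\eqref{eq:mi_noiseless}, and then simplify the resulting conditions on $n$. Throughout I would take $\delta_1 \to 0$ slowly (e.g.\ $\delta_1 = 1/\log p$), so that since $k = \Theta(p^\theta)$ implies $\log\frac{p}{k} = (1-\theta)\log p \to \infty$, the $\log\frac{1}{\delta_1}$ terms are negligible on the exponential scale and $\log\big(\frac{1}{\delta_1}(p-k)\big) = \log p\,(1+o(1))$, $\log\big(\frac{1}{\delta_1}\cdot\frac{p-k}{k}\big) = \log\frac{p}{k}\,(1+o(1))$.

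For part~(i) I would invoke Theorem~\ref{cor:partial_fp}. Condition~\eqref{eq:ex_cond1_fp} becomes $n \ge \frac{k\log\frac{p}{k}}{(\log 2)^2(1-\delta_2)}(1+o(1))$, the first term of~\eqref{eq:noiseless_final_fp}. For condition~\eqref{eq:ex_cond2_fp}, substituting the right-hand side of~\eqref{eq:conc_noiseles} for $\psi_n(\delta_2)$, the requirement $k\,\psi_n(\delta_2)\to 0$ is equivalent, after taking logarithms and rearranging, to $n \ge \frac{k\log k}{(\log 2)^2\big((1-\delta_2)\log(1-\delta_2)+\delta_2\big)}(1+o(1))$, the second term. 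Taking the maximum of these two requirements and then minimizing over $\delta_2 > 0$ (the minimum is genuinely interior because the second term diverges as $\delta_2 \to 0$, since $(1-\delta_2)\log(1-\delta_2)+\delta_2 \to 0$) yields~\eqref{eq:noiseless_final_fp}.

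For parts~(ii) and~(iii) the key simplification is that conditions~\eqref{eq:ex_cond2_fn} and~\eqref{eq:ex_cond2_both} only demand $\psi_n(\delta_2)\to 0$, not $k\,\psi_n(\delta_2)\to 0$. Since Lemma~\ref{lem:noiseless_conc} gives $\psi_n(\delta_2) = \exp\big(-\Theta(n/k)\big)$ for any fixed $\delta_2\in(0,1)$, while the first condition (\eqref{eq:ex_cond1_fn} resp.~\eqref{eq:ex_cond1_both}) already forces $n/k\to\infty$, these are automatically met; hence only the first condition is binding and we may send $\delta_2\to 0$, absorbing it into $\eta$. For part~(ii) this gives $n \ge \frac{k\log p}{(\log 2)^2}(1+\eta)$, i.e.~\eqref{eq:noiseless_final_fn}; for part~(iii), the same argument with $\frac{p-k}{k}$ replacing $p-k$ gives $n \ge \frac{k\log\frac{p}{k}}{(\log 2)^2}(1+\eta)$, i.e.~\eqref{eq:noiseless_final_both}.

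The arguments are essentially bookkeeping once the machinery is assembled, so I do not anticipate a substantive obstacle; the only points needing care are (a) choosing $\delta_1\to 0$ slowly enough that $\log\frac{1}{\delta_1} = o\big(\log\frac{p}{k}\big)$, which is exactly where the hypothesis $k = \Theta(p^\theta)$ with $\theta < 1$ enters, and (b) justifying that the second (``concentration'') term may be dropped in parts~(ii) and~(iii)---this is precisely the dividend of allowing $\Theta(k)$ false negatives, since then a single, arbitrarily slowly vanishing per-item failure probability suffices rather than one summable over all $k$ defectives.
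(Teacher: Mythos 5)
Your proposal is correct and follows essentially the same route as the paper: it plugs the noiseless concentration bound of Lemma~\ref{lem:noiseless_conc} and $I_1 = \frac{(\log 2)^2}{k}(1+o(1))$ into Theorems~\ref{cor:partial_fp}--\ref{cor:partial_both}, just as the paper does by mirroring the exact-recovery corollary with the partial-recovery theorems substituted. Your observation that allowing false negatives relaxes the requirement to $\psi_n(\delta_2)\to 0$, which holds automatically since $n/k\to\infty$, so that $\delta_2$ can be taken arbitrarily small and absorbed into $1+\eta$, is exactly the paper's own justification for parts (ii) and (iii).
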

\begin{proof}
    These results follow in the same way as Corollary \ref{cor:exact_symmetric}, but with Theorems \ref{cor:partial_fp}--\ref{cor:partial_both} in place of Theorem \ref{cor:exact}.   When false negatives are allowed, $\delta_2$ can be arbitrarily small, so it is factored into the $1+\eta$ remainder term.
\end{proof}

The above bounds are depicted visually and compared to existing bounds in Figure \ref{fig:rates} (Left), and were discussed in Section \ref{sec:contributions}.

\noindent{\bf \underline{Symmetric noise model}:} Next, we provide an analog of Corollary \ref{cor:exact_symmetric} for partial recovery.

\begin{cor} \label{cor:symm_partial} {\em (Symmetric noise model, partial recovery)}
    Under the symmetric noise model with a fixed parameter $\rho \in \big(0,\frac{1}{2}\big)$ (not depending on $p$) and $\nu = \nusymm$ (\emph{cf.}, \eqref{eq:nu1}), we can achieve we can achieve $\pe(\dpos,\dneg) \to 0$ with separate decoding of items under any of the following conditions:
    
    (i) $\dpos = \Theta(k)$, $\dneg = 0$, and
            \begin{multline}
                n \ge \min_{\delta_2 > 0} \max\bigg\{ \frac{k \log\frac{p}{k}}{ (\log 2)(\log 2 - H_2(\rho)) (1-\delta_2) }, \\ \frac{(k \log k)\cdot\big(\cvar + \frac{1}{3}\cmean\cmax\delta_2\big) }{\frac{1}{2} \cmean^2 \delta_2^2} \bigg\} (1+\eta) \label{eq:noisy_final_fp}
            \end{multline}
            for some $\eta > 0$, where $\cmean$, $\cvar$, and $\cmax$ are respectively given by the right-hand sides of \eqref{eq:mi_symm}--\eqref{eq:max_symm}.
    
    (ii) $\dpos = 0$, $\dneg = \Theta(k)$, and
        \begin{equation}
            n \ge \frac{k \log p}{ (\log 2)(\log 2 - H_2(\rho)) } (1+\eta) \label{eq:noisy_final_fn}
        \end{equation}
        for some $\eta > 0$.
    
    (iii) $\dpos = \Theta(k)$, $\dneg = \Theta(k)$, and
    \begin{equation}
        n \ge \frac{k \log \frac{p}{k}}{ (\log 2)(\log 2 - H_2(\rho)) } (1+\eta) \label{eq:noisy_final_both}
    \end{equation}
    for some $\eta > 0$.
\end{cor}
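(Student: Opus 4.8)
The plan is to combine the general partial-recovery theorems (Theorems~\ref{cor:partial_fp}--\ref{cor:partial_both}) with the symmetric-noise Bernstein parameters of Lemma~\ref{lem:bernstein_noisy}, proceeding exactly as Corollary~\ref{cor:exact_symmetric} was obtained from Theorem~\ref{cor:exact} and Lemma~\ref{lem:bernstein}. I would take the concentration function $\psi_n(\delta_2)$ to be the Bernstein bound of Lemma~\ref{lem:bernstein}, namely $\psi_n(\delta_2) = 2\exp\big(-\frac{1}{2}\cdot\frac{n}{k}\cdot\frac{\cmean^2\delta_2^2}{\cvar + \frac{1}{3}\cmean\cmax\delta_2}\big)$, with $\cmean$, $\cvar$, $\cmax$ given by the right-hand sides of \eqref{eq:mi_symm}--\eqref{eq:max_symm}; in particular $I_1 = \cmean/k = \frac{(\log 2)(\log 2 - H_2(\rho))}{k}(1+o(1))$.

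For part~(i) I would invoke Theorem~\ref{cor:partial_fp}. Condition \eqref{eq:ex_cond1_fp} reads $n \ge \frac{\log((p-k)/(k\delta_1))}{I_1(1-\delta_2)}$; taking $\delta_1 \to 0$ sufficiently slowly and using $\log\frac{p-k}{k} = \log\frac{p}{k}(1+o(1))$ together with the value of $I_1$ yields the first term in the max in \eqref{eq:noisy_final_fp}. Condition \eqref{eq:ex_cond2_fp}, $k\,\psi_n(\delta_2)\to0$, becomes after substituting the Bernstein expression and rearranging $n \ge \frac{(k\log k)(\cvar + \frac{1}{3}\cmean\cmax\delta_2)}{\frac{1}{2}\cmean^2\delta_2^2}(1+o(1))$, the second term in the max. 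Minimizing over $\delta_2 > 0$ and absorbing lower-order corrections into $1+\eta$ finishes this case, in exact parallel with Corollary~\ref{cor:exact_symmetric}.

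For parts~(ii) and~(iii) the simplification is that Theorems~\ref{cor:partial_fn} and~\ref{cor:partial_both} require only $\psi_n(\delta_2)\to0$ rather than $k\,\psi_n(\delta_2)\to0$. Since the first term alone already forces $n/k\to\infty$, the Bernstein bound $\psi_n(\delta_2)$ tends to zero for \emph{any} fixed $\delta_2>0$, so $\delta_2$ may be taken arbitrarily small and folded into the $1+\eta$ factor; only the first term of the max then survives. For part~(ii), \eqref{eq:ex_cond1_fn} with $\gamma = \log\frac{p-k}{\delta_1}$ and $\delta_1\to0$ slowly gives $n \ge \frac{\log p}{I_1}(1+\eta) = \frac{k\log p}{(\log 2)(\log 2 - H_2(\rho))}(1+\eta)$, i.e.\ \eqref{eq:noisy_final_fn}; for part~(iii), \eqref{eq:ex_cond1_both} replaces $\log(p-k)$ by $\log\frac{p-k}{k} = \log\frac{p}{k}(1+o(1))$, yielding \eqref{eq:noisy_final_both}.

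The argument is entirely routine once these ingredients are assembled; the only point needing care is the claim that $\psi_n(\delta_2)\to0$ for fixed $\delta_2$ in parts~(ii)--(iii), which reduces to checking $n/k\to\infty$ — immediate from the first term of the stated bound together with $k = o(p)$ and $\theta < 1$ fixed.
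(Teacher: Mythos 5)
Your proposal is correct and follows essentially the same route as the paper: the paper's proof simply invokes Theorems \ref{cor:partial_fp}--\ref{cor:partial_both} with the Bernstein bound of Lemma \ref{lem:bernstein} instantiated via Lemma \ref{lem:bernstein_noisy}, exactly as you do, including the observation that when false negatives are allowed only $\psi_n(\delta_2)\to 0$ is needed, so $\delta_2$ can be taken arbitrarily small and absorbed into the $1+\eta$ factor. Your explicit check that $n/k\to\infty$ makes the Bernstein term vanish for fixed $\delta_2$ is the same point the paper leaves implicit.
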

\begin{proof}
    These bounds follow in the same way as Corollary \ref{cor:exact_symmetric}, but with Theorems \ref{cor:partial_fp}--\ref{cor:partial_both} in place of Theorem \ref{cor:exact}.  
\end{proof}

The above bounds are depicted visually and compared to existing bounds in Figure \ref{fig:rates} (Right), and were discussed in Section \ref{sec:contributions}.  Similarly to the discussion following Corollary \ref{cor:exact_symmetric}, slightly improved bounds can be obtained by considering $\nu \ne \nusymm$; for instance, \eqref{eq:noisy_final_both} can be generalized to $n \ge \frac{ k\log\frac{p}{k} }{ \nu D_2(\rho \| \rho \star e^{-\nu}) } (1 + \eta)$, where $a \star b = a(1-b) + b(1-a)$, and $D_2(a\|b) = a\log\frac{a}{b} + (1-a)\log\frac{1-a}{1-b}$ (see Appendix \ref{sec:pf_nu}). \medskip

\noindent{\bf \underline{Other noise models}:} Analogous bounds to those in Corollary \ref{cor:symm_partial} can be obtained in an identical manner for any noise model such that the quantities $\cmean$, $\cvar$, and $\cmax$ in \eqref{eq:cmean}--\eqref{eq:cmax} behave as $\Theta(1)$.  To avoid repetition, we state this formally only for the case of both false positives and false negatives.

\begin{cor}\label{cor:gen_partial}   {\em (General noise models, partial recovery)}
    For any group testing model such that the quantities $\cmean$, $\cvar$, and $\cmax$ in \eqref{eq:cmean}--\eqref{eq:cmax} behave as $\Theta(1)$, we can achieve $\pe(\dpos,\dneg) \to 0$ with separate decoding provided that $\dpos = \Theta(k)$, $\dneg = \Theta(k)$, and
        \begin{equation}
            n \ge \frac{\log \frac{p}{k}}{ I_1 } (1+\eta) = \frac{k \log \frac{p}{k}}{ \cmean } (1+\eta) \label{eq:noisy_general}
        \end{equation}
        for some $\eta > 0$.    
\end{cor}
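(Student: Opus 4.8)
The plan is to read Corollary~\ref{cor:gen_partial} off of Theorem~\ref{cor:partial_both} by supplying a concrete concentration function $\psi_n(\delta_2)$ via Bernstein's inequality, in exact parallel with the treatment of the symmetric-noise case in Corollary~\ref{cor:symm_partial}. Since \eqref{eq:noisy_general} is the ``both false positives and false negatives'' threshold, the relevant abstract result is Theorem~\ref{cor:partial_both}, and the whole task reduces to verifying its two hypotheses \eqref{eq:ex_cond1_both} and \eqref{eq:ex_cond2_both} for a suitable choice of the free parameters $\delta_1$ and $\delta_2$.

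Concretely, fix $\eta>0$. First I would apply Lemma~\ref{lem:bernstein}, which (since the two-sided bound \eqref{eq:bern1} implies the one-sided bound \eqref{eq:assump1}) gives \eqref{eq:assump1} with
\[
    \psi_n(\delta_2) = 2\exp\!\bigg( -\frac{n}{k}\cdot\frac{\tfrac12\,\cmean^2\delta_2^2}{\cvar+\tfrac13\cmean\cmax\delta_2} \bigg).
\]
Because $\cmean,\cvar,\cmax=\Theta(1)$ by hypothesis, for each fixed $\delta_2>0$ the factor multiplying $n/k$ stays bounded away from $0$ as $p\to\infty$, so $\psi_n(\delta_2)=2\exp(-\Theta(n/k))$. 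I would then pick $\delta_2>0$ small enough that $(1-\delta_2)^{-1}\le 1+\tfrac{\eta}{2}$, and pick $\delta_1=\delta_1(p)\to 0$ slowly enough that $\log\tfrac1{\delta_1}=o\!\big(\log\tfrac{p}{k}\big)$; this is possible precisely because $k=o(p)$ forces $\tfrac{p}{k}\to\infty$, hence $\log\tfrac{p}{k}\to\infty$. Using $k=o(p)$ once more to write $\log(1-\tfrac{k}{p})=o(1)$, these choices yield $\log\!\big(\tfrac1{\delta_1}\cdot\tfrac{p-k}{k}\big)=\big(\log\tfrac{p}{k}\big)(1+o(1))$, so, recalling $I_1=\cmean/k$ from \eqref{eq:cmean}, the right-hand side of \eqref{eq:ex_cond1_both} equals $\tfrac{\log(p/k)}{I_1}\cdot\tfrac{1+o(1)}{1-\delta_2}\le\tfrac{\log(p/k)}{I_1}(1+\eta)$ for all large $p$; thus the assumed bound \eqref{eq:noisy_general} on $n$ implies \eqref{eq:ex_cond1_both}. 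Finally, \eqref{eq:noisy_general} also gives $n/k\ge\tfrac{\log(p/k)}{\cmean}(1+\eta)\to\infty$, so $\psi_n(\delta_2)\to 0$, which is \eqref{eq:ex_cond2_both}. Applying Theorem~\ref{cor:partial_both} --- with $\alphapos,\alphaneg$ taken to be constants satisfying $\alphapos k\le\dpos$ and $\alphaneg k\le\dneg$, which exist since $\dpos,\dneg=\Theta(k)$ and $\dneg\le k-1$, and noting that enlarging the tolerances only shrinks the error event --- then gives $\pe(\dpos,\dneg)\to 0$.

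Since the result is essentially a specialization, there is no genuine obstacle; the step worth highlighting is the verification of \eqref{eq:ex_cond2_both}. Theorem~\ref{cor:partial_both} asks only for $\psi_n(\delta_2)\to 0$ --- concentration of $\imath_1^n(\Xv_1,\Yv)$ about its mean at an arbitrarily slow rate --- rather than the $O(1/k)$ rate needed for the exact-recovery Corollary~\ref{cor:exact}, and it is exactly this relaxation that lets the crude Bernstein bound suffice despite $\cmax$ being merely $\Theta(1)$: the Bernstein exponent $\Theta(n/k)$ tends to infinity for free once $n$ meets the information-theoretically shaped target $\tfrac{\log(p/k)}{I_1}$, with no restriction on how fast $k$ grows beyond $k=o(p)$. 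Everything else is routine asymptotic bookkeeping --- absorbing $\log\tfrac{p-k}{k}$, the slack $\log\tfrac1{\delta_1}$, and the factor $(1-\delta_2)^{-1}$ into a single $(1+\eta)$ --- which needs nothing beyond $\log\tfrac{p}{k}\to\infty$.
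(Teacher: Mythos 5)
Your proposal is correct and follows essentially the same route as the paper: specialize Theorem~\ref{cor:partial_both} with the Bernstein bound of Lemma~\ref{lem:bernstein} supplying $\psi_n(\delta_2)$, note that $\cmean,\cvar,\cmax=\Theta(1)$ makes $\psi_n(\delta_2)=2\exp(-\Theta(n/k))\to 0$ once $n/k\to\infty$, and absorb $\delta_1$, $\delta_2$, and the $\log\frac{p-k}{k}$ versus $\log\frac{p}{k}$ discrepancy into the $(1+\eta)$ factor. Your explicit bookkeeping (choice of $\delta_1$ with $\log\frac{1}{\delta_1}=o(\log\frac{p}{k})$, and the monotonicity remark for $\dpos,\dneg=\Theta(k)$) is exactly what the paper leaves implicit when it says the result follows ``in an identical manner'' to Corollary~\ref{cor:symm_partial}.
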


As mentioned in Section \ref{sec:related}, characterizations of $\cmean$ (or equivalently, $I_1$) were given for several noise models in \cite{Laa14}.

%
% CONVERSE
%
\section{Converse Results} \label{sec:converse}

In this section, we present lower bounds on the required number of tests to meet certain recovery criteria with separate decoding of items and Bernoulli testing.  Specifically, these bounds apply to any decoders of the form \eqref{eq:separate_gen}, with functions $\phi_j$ that may differ from \eqref{eq:separate_dec}.  Throughout the section, we again make use of the notation from Sections \ref{sec:setup} and \ref{sec:additional}.

It will prove convenient to consider converse bounds with respect to the {\em average} number of errors $\EE[\Nerr] = \sum_{j=1}^p \pej$ ({\em cf.}, \eqref{eq:Nerr}), where we define
\begin{equation}
    \pej := \PP[\hat{\beta}_j \ne \beta_j],
\end{equation}
and where the probability is with respect to $\beta$, $\Xv$, and $\Yv$.  While this is different to the criteria in \eqref{eq:pe} and \eqref{eq:pe_partial}, our analysis in Section \ref{sec:ach_exact} for exact recovery can be interpreted as proving $\EE[\Nerr] = o(1)$ and then using $\PP[\Nerr \ne 0] \le \EE[\Nerr]$ (i.e., the first moment method).  Similarly, Theorem \ref{cor:partial_both} in Section \ref{sec:ach_partial} is implicitly based on showing that $\EE[\Nerr] = o(k)$.  Our main converse result, Theorem \ref{cor:conv}, shows (under mild concentration conditions) that if $n$ is slightly below the threshold \eqref{eq:ex_cond1_both} in Theorem \ref{cor:partial_both}, then we must have $\EE[\Nerr] \ge k(1-o(1))$ (i.e., the number of errors is close to the trivial value of $k$ that would be attained by declaring every item to be non-defective).  Hence, these results collectively indicate a {\em phase transition}: For $n$ slightly below the threshold we have $\frac{\EE[\Nerr]}{k} \to 1$, but for $n$ slightly above that threshold we have $\frac{\EE[\Nerr]}{k} \to 0$.

% In addition to characterizing $\EE[\Nerr]$, we provide a simple condition under which lower bounds on $n$ for the average number of errors can be translated to lower bounds on $n$ for the error probabilities $\pe$ and $\pe(\dpos,\dneg)$; see Lemma \ref{lem:Markov} below.  
We emphasize that the converse bounds in this section are proved for Bernoulli testing, as opposed to arbitrary test matrices  (e.g., see \cite{Sca16b}).  Converse bounds for arbitrary matrices appear to be difficult, and are left for future work.

\subsection{Initial Non-Asymptotic Bound}

We begin with a non-asymptotic lower bound on $\pej$, which holds for any given $j \in \{1,\dotsc,p\}$.  The proof is based on change-of-measure techniques that have been used extensively in channel coding \cite{Ver94,Han03}, though since the distribution of $\beta_j$ is non-uniform here, the analysis is in fact more akin to {\em joint source-channel coding} \cite{Tau11}. 

\begin{thm} \label{thm:vh} {\em (Non-asymptotic converse for single item)} 
    For a general group testing model with Bernoulli testing and any separate decoding of items rule of the form \eqref{eq:separate_gen}, we have 
    \begin{equation}
        \pej \ge \frac{k}{p} \PP\bigg[ \imath_1^n(\Xv_1,\Yv) \le \log\frac{p - k}{k} \bigg], \label{eq:vh}
    \end{equation}
    where $(\Xv_1,\Yv) \sim P_{X}^n(\xv_1) P^n_{Y|X_1}(\yv|\xv_1)$.
\end{thm}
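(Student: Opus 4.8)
\emph{Proof plan.} The plan is to prove this as a single-item Bayesian binary hypothesis-testing converse, in the spirit of the Verd\'u--Han converse for joint source--channel coding. Fix $j \in \{1,\dotsc,p\}$ and condition on the value of $\beta_j$. Under i.i.d.\ $\Bernoulli\big(\frac{\nu}{k}\big)$ testing the columns of $\Xv$ are i.i.d., and $\Yv$ is generated from the columns of the defective items only; hence, by the symmetry among items, conditioned on $\beta_j = 1$ (an event of probability $\frac{k}{p}$) the pair $(\Xv_j,\Yv)$ has law $Q_1(\xv_j,\yv) := P_{X}^n(\xv_j)\,P^n_{Y|X_1}(\yv|\xv_j)$, whereas conditioned on $\beta_j = 0$ (probability $1-\frac{k}{p}$) the column $\Xv_j$ plays no role in generating $\Yv$ and the number of defectives in any given test is still $\Binomial\big(k,\frac{\nu}{k}\big)$, so $(\Xv_j,\Yv)$ has law $Q_0(\xv_j,\yv) := P_{X}^n(\xv_j)\,P_Y^n(\yv)$ with $\Xv_j$ independent of $\Yv$. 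The key point is that the log-likelihood ratio of these two laws is exactly the information density: $\log\big(dQ_1/dQ_0\big)(\xv_j,\yv) = \imath_1^n(\xv_j,\yv)$.

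Writing $\phi_j$ for the deterministic decision rule in \eqref{eq:separate_gen} and decomposing the error probability over the value of $\beta_j$, I obtain
\begin{equation}
    \pej = \Big(1-\frac{k}{p}\Big)\, Q_0[\phi_j = 1] + \frac{k}{p}\, Q_1[\phi_j = 0]. \label{eq:plan_decomp}
\end{equation}
The next step is to intersect both events with the ``atypical'' set $\mathcal{A} := \big\{\imath_1^n(\Xv_j,\Yv) \le \log\frac{p-k}{k}\big\}$, equivalently $\big\{ dQ_1/dQ_0 \le \frac{p-k}{k}\big\}$, and to apply the change-of-measure inequality $Q_0[\mathcal{E}\cap\mathcal{A}] \ge \frac{k}{p-k}\,Q_1[\mathcal{E}\cap\mathcal{A}]$, which holds for every event $\mathcal{E}$ because on $\mathcal{A}$ one has $dQ_1 \le \frac{p-k}{k}\,dQ_0$. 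Applying this to the first term of \eqref{eq:plan_decomp} with $\mathcal{E} = \{\phi_j = 1\}$ and discarding the part of the second term outside $\mathcal{A}$ yields
\begin{align*}
    \pej &\ge \Big(1-\frac{k}{p}\Big)\frac{k}{p-k}\, Q_1\big[\{\phi_j = 1\}\cap\mathcal{A}\big] \\
    &\quad {}+ \frac{k}{p}\, Q_1\big[\{\phi_j = 0\}\cap\mathcal{A}\big] \\
    &= \frac{k}{p}\, Q_1[\mathcal{A}],
\end{align*}
where the last equality uses $\big(1-\frac{k}{p}\big)\frac{k}{p-k} = \frac{k}{p}$ together with the fact that $\{\phi_j = 0\}$ and $\{\phi_j = 1\}$ partition the sample space. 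Since $Q_1[\mathcal{A}] = \PP\big[\imath_1^n(\Xv_1,\Yv) \le \log\frac{p-k}{k}\big]$ by the definition of $Q_1$, this is precisely \eqref{eq:vh}.

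I expect the only genuinely delicate step to be the first one: pinning down the conditional laws $Q_0$ and $Q_1$, and in particular checking that the marginal of $\Yv$ given $\beta_j = 0$ equals the product law $P_Y^n$ formed from the same single-test distribution $P_Y$ that appears inside $\imath_1$. This rests on the observations that, under i.i.d.\ Bernoulli testing, the number of defective items in a test is $\Binomial\big(k,\frac{\nu}{k}\big)$ whether or not we condition on $\beta_j$, and that the entries of $\Yv$ are conditionally independent given $\Xv_j$ under $\beta_j = 1$ (and, under $\beta_j = 0$, independent of $\Xv_j$ altogether). Everything after that is the routine Verd\'u--Han change-of-measure manipulation; in particular no concentration or asymptotic hypotheses enter, consistent with the statement being non-asymptotic.
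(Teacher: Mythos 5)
Your proof is correct and yields exactly the stated bound; the underlying idea is the same change of measure between $P^n_{Y|X_1}$ and $P_Y^n$ at threshold $\frac{p-k}{k}$ that the paper uses, but your bookkeeping is genuinely different and arguably cleaner. You decompose $\pej$ exactly by conditioning on $\beta_j$, identify the conditional laws $Q_1 = P_X^n \times P^n_{Y|X_1}$ and $Q_0 = P_X^n \times P_Y^n$, and apply the change of measure to the false-alarm term $Q_0[\phi_j = 1]$ restricted to the set where $\imath_1^n \le \log\frac{p-k}{k}$, so the prior $\frac{k}{p}$ enters directly and the two pieces recombine into $\frac{k}{p}Q_1[\Ac]$ because $\{\phi_j=0\}$ and $\{\phi_j=1\}$ partition the space (legitimate here since the decoder is deterministic). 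The paper instead runs the Verd\'u--Han form: it lower-bounds $\pej \ge \PP[\Ac_j] - \PP[\Ac_j \cap \Cc_j]$ with an information-density event that includes the source term $P_{\beta_j}(\beta_j)$, bounds $\PP[\Ac_j \cap \Cc_j] \le e^{\gamma}$ by summing over the decision regions, and then chooses $\gamma = \log\big(1-\frac{k}{p}\big)$ so that the $\beta_j = 0$ contribution cancels the $e^{\gamma}$ term; your route dispenses with the auxiliary threshold $\gamma$ and that cancellation, at the cost of being slightly more tailored to the binary, non-uniform-prior situation at hand. Both arguments rest on the same two structural facts you correctly flag as the delicate step -- that $(\Xv_j,\Yv)$ given $\beta_j=1$ has the product law $P_X^n \times P^n_{Y|X_1}$, and that given $\beta_j=0$ the output has law $P_Y^n$ and is independent of $\Xv_j$ -- and the paper justifies them with the same symmetry/marginalization reasoning you sketch, so no gap remains.
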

\begin{proof}
    Any given item is defective with probability $\frac{k}{p}$.  Hence, we let $P_{\beta_j}(b_j)$ be the $\Bernoulli\big(\frac{k}{p}\big)$ probability mass function. Moreover, we let $P_{Y|X_j,\beta_j}^n(\cdot|\cdot,b_j)$ be the $n$-th product of $P_{Y|X_j,\beta_j}(\cdot|\cdot,b_j)$, with the latter defined following \eqref{eq:separate_dec}.
    
    By the law of total probability, under any event $\Ac_j$, we have
    \begin{equation}
        \pej \ge \PP[\Ac_j] - \PP[\Ac_j \cap \Cc_j], \label{eq:conv_init}
    \end{equation}
    where $\Cc_j = \openone\{ \hat{\beta}_j = \beta_j \}$ is the correct decoding event for item $j$.  In analogy with change-of-measure techniques from joint source-channel coding \cite{Tau11}, we take $\Ac_j = \big\{ \log\frac{P_{\beta_j}(\beta_j) P_{Y|X_j,\beta_j}^n(\Yv|\Xv_j,\beta_j) }{ P_Y^n(\Yv) } \le \gamma \big\}$ for some $\gamma  > 0$.  
    
    To proceed, we make use of the fact that under separate decoding of items ({\em cf.}, \eqref{eq:separate_gen}), the correct decoding event $\Cc_j$ depends on the tests $(\Xv,\Yv)$ only through $(\Xv_j,\Yv)$; the other columns $\Xv_{j'}$ ($j' \ne j$) of the test matrix have no impact.  In particular, for each possible outcome $b_j \in \{0,1\}$ of $\beta_j$, we can define a set $\Dc(b_j)$ of $(\xv_j,\yv)$ pairs that lead to the decision $\hat{\beta}_j = b_j$.  Under this definition, we can bound the second term in \eqref{eq:conv_init} as follows:
    \begin{align}
        &\PP[\Ac_j \cap \Cc_j] \nonumber \\
            &~~= \sum_{b_j} \sum_{(\xv_j,\yv) \in \Dc(b_j)} P_{\beta_j}(b_j) P_{X}^n(\xv_j) P_{Y|X_j,\beta_j}^n(\yv|\xv_j,b_j) \nonumber \\
                &\qquad\times \openone\bigg\{ \log\frac{P_{\beta_j}(b_j) P_{Y|X_j,\beta_j}^n(\yv|\xv_j,b_j) }{ P_Y^n(\yv) } \le \gamma \bigg\} \label{eq:vh_step1} \\
            &~~\le \sum_{b_j} \sum_{(\xv_j,\yv) \in \Dc(b_j)} P_{X}^n(\xv_j) P_{Y}^n(\yv) e^{\gamma} \label{eq:vh_step2} \\
            &~~= e^{\gamma}, \label{eq:vh_step3}
    \end{align}
    where
    \begin{itemize}
        \item \eqref{eq:vh_step1} follows directly from the preceding definition of $\Dc(b_j)$, the choice of $\Ac_j$, and the joint distribution $(\beta_j,\Xv_j,\Yv) \sim P_{\beta_j} \times P_X^n \times P_{Y|X_j,\beta_j}^n$;
        \item \eqref{eq:vh_step2} follows by upper bounding $P_{\beta_j}(b_j) P_{Y|X_j,\beta_j}^n(\yv|\xv_j,b_j)$ according to the event in the indicator function, and then upper bounding the indicator function by one;
        \item \eqref{eq:vh_step3} follows since by definition, the union of the sets $\Dc(b_j)$ over all $b_j$ is precisely the set of all $(\xv_j,\yv)$ pairs.
    \end{itemize}
    Substituting \eqref{eq:vh_step3} and the choice of $\Ac_j$ into \eqref{eq:conv_init}, we obtain
    \begin{align}
        \pej &\ge \PP\bigg[ \log\frac{P_{\beta_j}(\beta_j) P_{Y|X_j,\beta_j}^n(\Yv|\Xv_j,\beta_j) }{ P_Y^n(\Yv) } \le \gamma \bigg] - e^{\gamma} \\
            &= \frac{k}{p} \PP\bigg[ \log\frac{ P_{Y|X_j,\beta_j}^n(\Yv|\Xv_j,1) }{ P_Y^n(\Yv) } \le \log\frac{p}{k} + \gamma \,\Big|\, \beta_j = 1 \bigg] \nonumber \\
            &\qquad + \Big(1 - \frac{k}{p}\Big) \PP\bigg[ \log\Big( 1 - \frac{k}{p} \Big)\le \gamma \bigg] - e^{\gamma}, \label{eq:vh_step5}
    \end{align}
    where we have applied $\beta_j \sim \Bernoulli\big(\frac{k}{p}\big)$ and used $\log\frac{ P_{Y|X_j,\beta_j}^n(\Yv|\Xv_j,0) }{ P_Y^n(\Yv) } = 0$.  The latter claim follows since conditioned on $\beta_j = 0$, the output is independent of $\Xv_j$ and is randomly generated according to $k$ different columns of $\Xv$, meaning $ P_{Y|X_j,\beta_j}^n(\Yv|\Xv_j,0) = P_Y^n(\Yv)$.
    
    Setting $\gamma = \log\big(1 - \frac{k}{p}\big)$ in \eqref{eq:vh_step5}, the second term becomes $1 - \frac{k}{p}$, and the third term exactly cancels out the second term.  Finally, the first term equals the right-hand side of \eqref{eq:vh}, since by definition $(\Xv_1,\Yv)$ in \eqref{eq:vh} has the same distribution as $(\Xv_j,\Yv)$ conditioned on $\beta_j = 1$ in \eqref{eq:vh_step5}. 
\end{proof}

\subsection{Asymptotic Analysis}

In order to apply Theorem \ref{thm:vh}, we need to characterize the probability appearing in the first term.  Similarly to the achievability analysis in Section \ref{sec:ach_exact}, the idea is to exploit the fact that $\imath_1^n(\Xv_1,\Yv)$ is an i.i.d.~sum, and therefore concentrates around its mean. 

\begin{thm} \label{cor:conv} {\em (Converse for average number of errors)}
    Under the setup of Theorem \ref{thm:vh}, suppose that the information density satisfies a concentration inequality of the following form:
    \begin{equation}
        \PP[ \imath_1^n(\Xv_1,\Yv) \ge n I_1 (1 + \delta_2) ] \le \psi'_n(\delta_2) \label{eq:assump1c}
    \end{equation}
    for some function $\psi'_n(\delta_2)$.  Moreover, suppose that the following conditions hold for some $\delta_2 > 0$:
    \begin{gather}
        n \le \frac{ \log\frac{p-k}{k} }{ I_1 (1 + \delta_2) }, \label{eq:cond1c} \\
        \psi'_n(\delta_2) \to 0. \label{eq:cond2c}
    \end{gather}
    Then any decoder based on separate decoding of items must have a number of errors $\Nerr$ satisfying $\EE[\Nerr] \ge k (1 - o(1))$.
\end{thm}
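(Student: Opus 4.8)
The plan is to obtain the bound by a short chain of inequalities that feeds the per-item converse of Theorem~\ref{thm:vh} into the upper-tail concentration hypothesis \eqref{eq:assump1c}; there is no delicate step once these two ingredients are in hand.

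First I would average the single-item bound \eqref{eq:vh} over all $p$ items. Writing $\Nerr = \sum_{j=1}^p \openone\{\hat\beta_j \ne \beta_j\}$ and using linearity of expectation gives $\EE[\Nerr] = \sum_{j=1}^p \pej$, and since the right-hand side of \eqref{eq:vh} is the same for every $j$ (there, $(\Xv_1,\Yv)$ has the generic law of the tests given a single defective), this yields
\[
    \EE[\Nerr] \;\ge\; \sum_{j=1}^p \frac{k}{p}\,\PP\Big[ \imath_1^n(\Xv_1,\Yv) \le \log\tfrac{p-k}{k} \Big] \;=\; k\,\PP\Big[ \imath_1^n(\Xv_1,\Yv) \le \log\tfrac{p-k}{k} \Big].
\]

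Next I would argue that the probability above is $1-o(1)$. Condition \eqref{eq:cond1c} says precisely that $n I_1(1+\delta_2) \le \log\frac{p-k}{k}$, so the event $\{\imath_1^n(\Xv_1,\Yv) \le n I_1(1+\delta_2)\}$ is contained in $\{\imath_1^n(\Xv_1,\Yv) \le \log\frac{p-k}{k}\}$. Hence the probability is at least $1 - \PP[\imath_1^n(\Xv_1,\Yv) \ge n I_1(1+\delta_2)] \ge 1 - \psi'_n(\delta_2)$ by \eqref{eq:assump1c}, and \eqref{eq:cond2c} forces $\psi'_n(\delta_2)\to 0$. Combining with the previous display gives $\EE[\Nerr] \ge k(1 - \psi'_n(\delta_2)) = k(1-o(1))$, which is the claim.

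I do not expect a genuine obstacle at this level of generality: the heavy lifting has been front-loaded into Theorem~\ref{thm:vh} (a Verd\'u--Han / change-of-measure bound adapted to the non-uniform prior on $\beta_j$, as in joint source--channel coding). The main work that remains is downstream — when specializing to the noiseless and symmetric-noise models one must verify \eqref{eq:assump1c}, i.e.\ an \emph{upper}-tail bound on the i.i.d.\ sum $\imath_1^n$. This is the benign tail and can be handled with the same Bernstein/Chernoff machinery as in Lemma~\ref{lem:bernstein} and Lemma~\ref{lem:noiseless_conc}, simply applied to deviations above the mean; one should also record that $\log\frac{p-k}{k} = \big(\log\frac{p}{k}\big)(1+o(1))$ so that the converse threshold $\frac{\log(p/k)}{I_1(1+\delta_2)}$ matches the achievability threshold \eqref{eq:ex_cond1_both} and the stated phase transition follows.
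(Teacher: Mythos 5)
Your proposal is correct and follows essentially the same route as the paper: substitute the condition $n I_1(1+\delta_2) \le \log\frac{p-k}{k}$ into the per-item bound of Theorem~\ref{thm:vh}, invoke the upper-tail concentration hypothesis \eqref{eq:assump1c} to conclude $\pej \ge \frac{k}{p}(1-\psi'_n(\delta_2))$, and sum over the $p$ items via $\EE[\Nerr]=\sum_{j=1}^p \pej$. The only (immaterial) difference is the order of summing versus applying concentration, and your closing remarks about verifying \eqref{eq:assump1c} for specific models match what the paper defers to Corollary~\ref{cor:conv_specific}.
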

\begin{proof}
    Substituting \eqref{eq:cond1c} into \eqref{eq:vh} gives $\pej \ge \frac{k}{p} \PP\big[ \imath_1^n(\Xv_1,\Yv) \le nI_1 (1 + \delta_2) \big]$, which is lower bounded by $\frac{k}{p}\big( 1 - \psi'_n(\delta_2)\big)$ by the definition of $\psi'_n$.  Hence, we have from \eqref{eq:cond2c} that $\pej \ge \frac{k}{p}(1 - o(1))$, and the theorem follows since $\EE[\Nerr] = \sum_{j=1}^p \pej$ by definition.
\end{proof}

\subsection{Application to Specific Models}

The following corollary applies Theorem \ref{cor:conv} to the noiseless setting, the symmetric noise model, and more general noise models where Bernstein's inequality can be applied.

\begin{cor} \label{cor:conv_specific} {\em (Converses for specific models)}
    For the group testing problem under Bernoulli testing with parameter $\nu > 0$, and $k = \Theta(p^{\theta})$ for some $\theta \in (0,1)$, any decoding rule based on separate decoding of items must satisfy $\EE[\Nerr] \ge k(1-o(1))$ in any of the following settings:
    
    (i) The tests are noiseless, and the number of tests satisfies
    \begin{equation}
        n \le \frac{ k\log\frac{p}{k} }{ (\log 2)^2 } (1 - \eta)
    \end{equation}
    for arbitrarily small $\eta > 0$.
    
    (ii) The noise is symmetric with parameter $\rho \in \big(0,\frac{1}{2}\big)$ (not depending on $p$),  and the number of tests satisfies 
    \begin{equation}
        n \le \frac{ k\log\frac{p}{k} }{ \nu D_2(\rho \| \rho \star e^{-\nu}) } (1 - \eta) \label{eq:symm_converse}
    \end{equation}
    for arbitrarily small $\eta > 0$, where $a \star b = a(1-b) + b(1-a)$, and $D_2(a\|b) = a\log\frac{a}{b} + (1-a)\log\frac{1-a}{1-b}$.
    
    (iii) The noise follows an arbitrary distribution such that $\cmean$, $\cvar$, and $\cmax$ in \eqref{eq:cmean}--\eqref{eq:cmax} all behave as $\Theta(1)$,  and the number of tests satisfies 
    \begin{equation}
        n \le \frac{ \log\frac{p}{k} }{ I_1 } (1 - \eta) = \frac{ k\log\frac{p}{k} }{ \cmean } (1 - \eta)
    \end{equation}
    for arbitrarily small $\eta > 0$.
\end{cor}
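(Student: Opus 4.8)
The plan is to obtain all three parts as instances of Theorem~\ref{cor:conv}, whose hypotheses require (a) an upper-tail concentration bound of the form \eqref{eq:assump1c} with some $\psi'_n(\delta_2)$, (b) the test-number condition $n \le \frac{\log\frac{p-k}{k}}{I_1(1+\delta_2)}$ (that is, \eqref{eq:cond1c}), and (c) $\psi'_n(\delta_2)\to 0$ (that is, \eqref{eq:cond2c}). Two preliminary observations serve all three cases. First, since $k=\Theta(p^\theta)$ with $\theta\in(0,1)$ we have $\log\frac{p-k}{k} = \log\frac{p}{k}(1+o(1))$ and $\log\frac{p}{k}=\Theta(\log p)\to\infty$; in particular, at the stated thresholds $n=\Theta(k\log p)$, so $n/k=\Theta(\log p)\to\infty$. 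Second, the infimum of $\EE[\Nerr]$ over all separate-decoding rules is non-increasing in $n$ (given $n'>n$ tests, a decoder may discard the last $n'-n$ rows of $\Xv$ and the corresponding entries of $\Yv$, which are conditionally independent of the retained data, and then apply the optimal $n$-test rule). Hence it suffices to prove each bound when $n$ equals the stated threshold value, at which point the concentration estimates below are in force; for smaller $n$ the same bound then follows from this monotonicity.

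For part (iii), by hypothesis $kI_1=\cmean=\Theta(1)$, so $\frac{\log\frac{p-k}{k}}{I_1(1+\delta_2)}=\frac{k\log\frac{p}{k}}{\cmean(1+\delta_2)}(1+o(1))$; given $n\le\frac{k\log\frac{p}{k}}{\cmean}(1-\eta)$ we may fix $\delta_2=\delta_2(\eta)>0$ small enough that \eqref{eq:cond1c} holds for all large $p$. For the concentration I would invoke the one-sided (upper-tail) version of Bernstein's inequality behind Lemma~\ref{lem:bernstein}, which gives $\psi'_n(\delta_2)=2\exp\!\big(-\tfrac{\frac12(n/k)\cmean^2\delta_2^2}{\cvar+\frac13\cmean\cmax\delta_2}\big)$; since $\cmean,\cvar,\cmax=\Theta(1)$ and $n/k=\Theta(\log p)$ at the threshold, this equals $p^{-\Theta(1)}\to 0$, verifying \eqref{eq:cond2c}. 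Theorem~\ref{cor:conv} then yields $\EE[\Nerr]\ge k(1-o(1))$.

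Parts (ii) and (i) are reductions to part (iii) once the correct leading constant $\cmean=kI_1$ is identified and $\cvar,\cmax=\Theta(1)$ is checked. For part (ii), conditioning on item~$1$ being defective, $P_{Y|X_1=1}$ is $\Bernoulli$ with $\PP[Y=1]=1-\rho$, while $P_{Y|X_1=0}$ has $\PP[Y=0]=\rho\star e^{-\nu}+o(1)$ (the OR of the other $k-1$ defective columns is $1$ with probability $1-(1-\tfrac{\nu}{k})^{k-1}\to 1-e^{-\nu}$, then flipped by $\Bernoulli(\rho)$ noise). Since $X_1=1$ only with probability $\frac{\nu}{k}$ and $P_Y=P_{Y|X_1=0}(1+o(1))$, the small-probability expansion of the mutual information gives $I_1=\frac{\nu}{k}D(P_{Y|X_1=1}\,\|\,P_{Y|X_1=0})(1+o(1))=\frac{\nu}{k}D_2\big(\rho\,\|\,\rho\star e^{-\nu}\big)(1+o(1))$, so $\cmean=\nu D_2(\rho\|\rho\star e^{-\nu})(1+o(1))$; the quantities $\cvar$ and $\cmax$ are $\Theta(1)$ because every realized value of $\imath_1$ is either a fixed function of the constants $\rho$ and $e^{-\nu}$ or is $O(1/k)$. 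Feeding this $\cmean$ into part (iii) gives \eqref{eq:symm_converse}. For part (i) we have $\cmean=kI_1=(\log 2)^2(1+o(1))$ by \eqref{eq:mi_noiseless}; rather than Bernstein it is cleaner to reuse the decomposition $\imath_1^n=\sum_{i=1}^{n} V^{(i)}+\sum_{i=1}^{n} W^{(i)}$ from the proof of Lemma~\ref{lem:noiseless_conc}, bounding the upper tail of $\sum_{i=1}^{n} V^{(i)}$ (a scaled $\Binomial(n,\nu/k)$) by the multiplicative Chernoff bound and that of $\sum_{i=1}^{n} W^{(i)}$ by Hoeffding's inequality exactly as there, which yields $\psi'_n(\delta_2)=\exp(-\Theta(n/k))\to 0$ at the threshold.

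The step most deserving of care is the concentration estimate: one must notice that at the operative scale $n=\Theta(k\log p)$ one has $n/k\to\infty$, so that the Bernstein/Chernoff exponents---which scale like $n/k$, not $n$---indeed force $\psi'_n(\delta_2)\to 0$; the monotonicity-in-$n$ reduction is what disposes of the smaller values of $n$ for which \eqref{eq:cond2c} could fail outright. The other point requiring attention is the mutual-information computation for the symmetric model at a general Bernoulli parameter $\nu$: unlike Lemma~\ref{lem:bernstein_noisy}, which exploits the simplification $P_Y(0)=P_Y(1)=\tfrac12$ special to $\nu=\nusymm$, here one works at arbitrary $\nu$ and must extract the constant $\nu D_2(\rho\|\rho\star e^{-\nu})$ from the first-order expansion of $I_1$.
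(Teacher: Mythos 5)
Your overall route is the paper's own: reduce to Theorem \ref{cor:conv}, dispose of $n$ strictly below the threshold by the ``ignore extra tests'' monotonicity (the paper's ``without loss of generality assume equality''), verify \eqref{eq:cond1c} using $\log\frac{p-k}{k}=\log\frac{p}{k}(1+o(1))$, and verify \eqref{eq:cond2c} via a one-sided Bernstein bound with $\cmean,\cvar,\cmax=\Theta(1)$ and $n/k=\Theta(\log p)$. For part (ii) you rederive $I_1=\frac{\nu}{k}D_2(\rho\|\rho\star e^{-\nu})(1+o(1))$ by a small-input-probability divergence expansion rather than citing the Taylor-expansion computation the paper carries out in Appendix \ref{sec:pf_nu}; this is the same fact obtained by an essentially equivalent calculation, and your check that $\cvar,\cmax=\Theta(1)$ at general $\nu$ is correct. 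Parts (ii) and (iii) are therefore fine.

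The gap is in part (i). The corollary is stated for an arbitrary Bernoulli parameter $\nu>0$, yet the threshold there is the $\nu$-independent constant $(\log 2)^2$. Your argument sets $\cmean=kI_1=(\log 2)^2(1+o(1))$ ``by \eqref{eq:mi_noiseless}'' and proposes to reuse the $V^{(i)},W^{(i)}$ decomposition of Lemma \ref{lem:noiseless_conc}; both of these are valid only for $\nu=\nusymm$ (they rest on $P_Y(0)=P_Y(1)=\tfrac12$), so as written you have proved part (i) only for that choice of $\nu$. To cover general $\nu$ you need the extra ingredient the paper invokes explicitly: $\nu=\nusymm$ (equivalently $\nu=\log 2$) asymptotically maximizes $I_1$ in the noiseless model, so that for every $\nu$ one has $I_1\le\frac{(\log 2)^2}{k}(1+o(1))$; then $n\le\frac{k\log\frac{p}{k}}{(\log 2)^2}(1-\eta)$ implies \eqref{eq:cond1c} for the actual $I_1$ at the actual $\nu$ (indeed with more slack when $\nu\ne\log 2$), and the upper-tail concentration must then be argued at that $\nu$ as well --- e.g.\ Bernstein over the support of $(X_1,Y)$, where the realizable information-density values are $\log\frac{1}{1-e^{-\nu}}(1+o(1))$ and $\pm O(1/k)$, or an adaptation of your Chernoff/Hoeffding split to non-uniform $P_Y$. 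This is a one- or two-sentence fix, but without it the statement as given (arbitrary $\nu>0$) is not established.
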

\begin{proof}
    Without loss of generality, we assume that the upper bounds on $n$ in the corollary statement hold with equality, since the decoders could always choose to ignore some tests.  Once $n$ is set in this way, we simply apply Theorem \ref{cor:conv}, using Lemma \ref{lem:bernstein} with $\cmean = \Theta(1)$, $\cvar = \Theta(1)$, and  $\cmax = \Theta(1)$ to establish that $\psi'_n(\delta_2) \to 0$ for arbitrarily small $\delta_2$.  For part (i), we use the fact that $\nu = \nusymm$ (\emph{cf.}, \eqref{eq:nu1}) is asymptotically optimal in the noiseless setting  \cite{Mal13,Laa14}, and use the corresponding characterization of $I_1$ in \eqref{eq:mi_noiseless}.  For part (ii), we use the fact that $I_1 = \big(\frac{\nu}{k} D_2(\rho \| \rho \star e^{-\nu}) \big)(1+o(1))$, as shown in Appendix \ref{sec:pf_nu}.
\end{proof}

Note that when $\nu = \log 2$ or $\nu = \nusymm$ in \eqref{eq:symm_converse}, the denominator simplifies to $(\log 2)(\log 2 - H_2(\rho))(1+o(1))$, thus matching the achievability threshold in \eqref{eq:noisy_final_both}.  Conversely, as discussed in Appendix \ref{sec:pf_nu}, the achievability arguments can be adapted to match \eqref{eq:symm_converse} more generally.

%
% NUMERICAL
%
\section{Numerical Experiments} \label{eq:numerical}

In this section, we complement our theoretical findings with numerical experiments, comparing separate decoding of items against alternative algorithms in both the noiseless and noisy settings.  We will see that although the linear programming (LP) methods of \cite{Mal12} tend to be superior, the experiments are consistent with our theoretical findings.

In all experiments below, we let $\Xv$ be an i.i.d.~Bernoulli matrix with parameter $\nu = \log 2$, and we estimate the success probabilities by averaging over 500 independent trials.  All LP based methods are solved using Gurobi \cite{Gurobi}.  For separate decoding of items as in \eqref{eq:separate_dec}, in accordance with our theoretical analysis, we choose
\begin{equation}
    \gamma = nI_1 (1-\delta)
\end{equation}
for some $\delta \in (0,1)$.  Specifically, we let $\delta = 0.5$, as we found this to work well in the examples considered below.

\subsection{Noiseless Setting}

In the noiseless setting, we compare against the following existing algorithms: 
\begin{itemize}
    \item {\em Combinatorial Orthogonal Matching Pursuit} (COMP) \cite{Cha11}, which declares an item as negative if and only if it appears in some negative test;
    \item {\em Definite Defectives} (DD) \cite{Ald14a}, which uses COMP to construct a set of {\em possible defectives}, and declares an item to be defective if and only if there exists a positive test in which it is the unique possibly defective item.
    \item The {\em linear programming} (LP) relaxation of \cite{Mal12}, with non-integer returned values of the defective status rounded to $\{0,1\}$.
\end{itemize}
%\begin{itemize}
%    \item The {\em Combinatorial Orthogonal Matching Pursuit} (COMP) algorithm declares all items in any negative test to be negative, and declares the remaining items to be positive.  Note that this idea dates back at least to \cite{Kau64}, but we adopt the name used in more recent works \cite{Cha11,Ald14a}.
%    \item The {\em Definite Defectives} (DD) algorithm \cite{Ald14a} declares all items in any negative test to be negative, and declares the rest to be {\em possibly defective}.  It then declares an item to be defective if and only if there exists a positive test in which it is the unique possibly defective item.
%    \item The {\em Linear Programming} (LP) relaxation method \cite{Mal12} solves the following, with $\succeq$ denoting element-wise inequality:
%    \begin{align}
%        \minimize_{\beta} ~~~& \sum_{j=1}^p \beta_j  \\
%        \text{subject to } ~~~& \beta_j \in [0,1] ~~\forall j \\
%            & \Xv_{\mathrm{neg}} \beta = \bzero \\
%            & \Xv_{\mathrm{pos}} \beta \succeq \bone, 
%    \end{align}
%    where $\Xv_{\mathrm{neg}}$ is the sub-matrix of $\Xv$ keeping only the rows corresponding to negative tests (similarly for $\Xv_{\mathrm{pos}}$), and $\bzero$ is the zero vector (similarly for $\bone$).  Since this procedure may return non-integer values of $\beta_j$, we subsequently perform rounding to obtain a vector in $\{0,1\}^p$.
%\end{itemize}

\begin{figure}
	\begin{centering}
         \includegraphics[width=0.9\columnwidth]{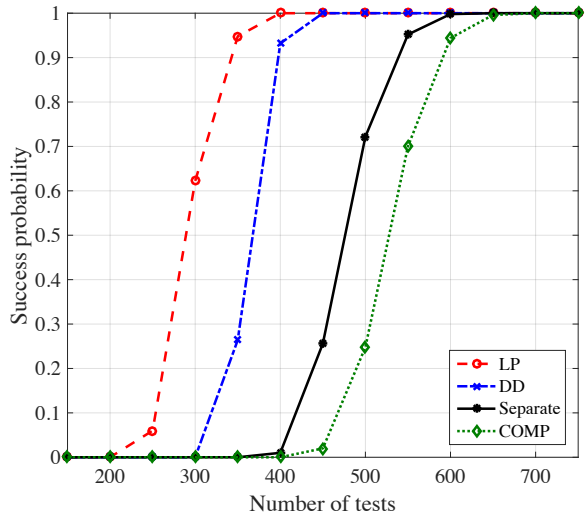} 
         \par
     \end{centering}

    \caption{ Experimental evaluation of practical algorithms for noiseless group testing, with $p = 3000$ items and $k = 30$ defectives, allowing up to $\dpos = 3$ false positives and $\dneg = 3$ false negatives.  \label{fig:exp_noiseless} }
\end{figure}

Figure \ref{fig:exp_noiseless} plots the success probability as a function of the number of tests with $p = 3000$, $k = 30$, and $\dpos = \dneg = 3$.  We observe that separate decoding slightly outperforms COMP, whereas both DD and LP provide better performance, with LP performing best.  This is consistent with the fact that DD and LP have better theoretical guarantees in the noiseless setting for most $\theta < \frac{1}{2}$ (\emph{cf.}, Figure \ref{fig:rates}).  Despite this, separate decoding does provide good performance, with the gaps to LP and DD being relatively small.

\subsection{Symmetric Noise Setting}

In the symmetric noise setting, we compare against the following existing algorithms:
\begin{itemize}
    \item The The {\em Noisy Combinatorial Orthogonal Matching Pursuit} (NCOMP) algorithm declares the $j$-th item to be defective if and only if $\frac{ \sum_{i=1}^n \openone\{ X_j^{(i)} = 1 \cap Y^{(i)} = 1 \} }{ \sum_{i=1}^n \openone\{ X_j^{(i)} = 1\}  } \ge 1 - \rho(1 + \Delta)$, for some threshold $\Delta > 0$.  We set $\Delta = 1.5$ based on manual tuning.
    \item The (noisy) {\em Linear Programming} (LP) of \cite{Mal12}, which introduces slack variables indicating tests where errors occurred.  This algorithm depends on a regularization parameter $\lambda$ weighting the slack variables, and we set this to be $\lambda = 0.5$ based on manual tuning.
\end{itemize}

\begin{figure}
	\begin{centering}
         \includegraphics[width=0.9\columnwidth]{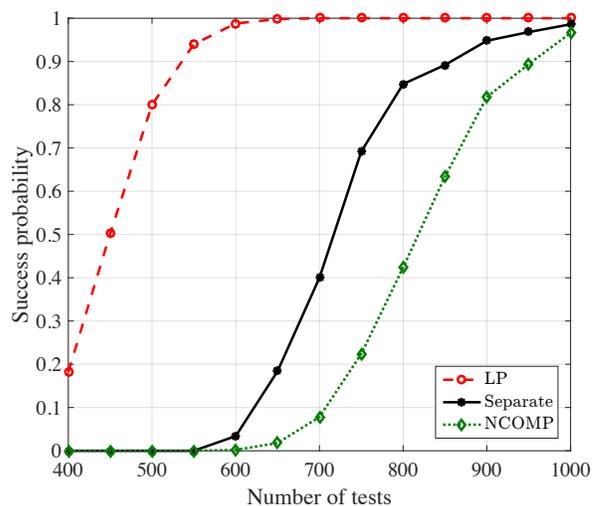} 
        \par
     \end{centering}

    \caption{ Experimental evaluation of practical algorithms for symmetric noisy group testing, with $p = 3000$ items, $k = 30$ defectives, and crossover probability $\rho = 0.05$, allowing up to $\dpos = 3$ false positives and $\dneg = 3$ false negatives.  \label{fig:exp_noisy} }
\end{figure}

Figure \ref{fig:exp_noisy} plots the success probability as a function of the number of tests with $p = 3000$, $k = 30$, $\dpos = \dneg = 3$, and $\rho = 0.05$.  Similarly to the noiseless example, we observe that separate decoding slightly outperforms NCOMP, with LP again providing the best performance. 

\subsection{Discussion}

In the above examples, we see that the LP method of \cite{Mal12} provides the best performance.  However, we recall from Section \ref{sec:related} that LP methods currently only have theoretical guarantees in the noiseless setting.  Moreover, it is non-trivial to develop large-scale distributed algorithms, as was done for separate decoding of items in \cite{Mal15,GrosuThesis}. These limitations present important directions for future research.

It may appear to be slightly surprising that the performance of NCOMP is close to the separate decoding rule of \eqref{eq:separate_dec}, since the former only depends on a small fraction of the tests (namely, those with $X_j^{(i)} = 1$).  The intuition is that although most tests do not contain item $j$ (i.e., $X_j^{(i)} = 0$), those that do contain item $j$ are considerably more informative for determining whether $j$ is defective.  Hence, focusing only on these tests does not degrade the performance too much.

%
% NUMERICAL
%
\section{Conclusion} 

We have provided an information-theoretic framework for studying group testing under the separate decoding of items method proposed in \cite{Mal80}.  Our bounds are the best known for any practical group testing algorithm in several cases of interest, including (i) the noiseless model with partial recovery, and (ii) the symmetric noise model with exact recovery {\em or} partial recovery.  Overall, the results of this paper establish separate decoding of items as a technique that is not only computationally efficient, but also comes with near-optimal theoretical guarantees.

An interesting direction for future work is to extend the achievability bounds to the case that the universal empirical mutual information based decoder \cite{Mal98} is used.  Moreover, two key open challenges regarding the converse bounds of Section \ref{sec:converse} include (i) considering the error probabilities $\pe$ or $\pe(\dpos,\dneg)$ in place of $\EE[\Nerr]$, and (ii) moving beyond i.i.d.~testing towards arbitrary test matrices.

%
% APPENDIX
%
\appendix

\subsection{Optimizing $\nu$ for the Symmetric Noise Model} \label{sec:pf_nu}

In this section, we characterize the mutual information $I_1$ in \eqref{eq:I1_intro} for the symmetric noise model described in \eqref{eq:gt_symm_model}, with a given crossover probability $\rho \in \big(0,\frac{1}{2}\big)$.  In contrast to the noiseless model, we will see that the choices $\nu = \log 2$ and $\nu = \nusymm$ ({\em cf.}, \eqref{eq:nu1}) do not always maximize $I_1$.

Throughout this section, asymptotic notation is with respect to the limit as $k \to \infty$.  We consider the case that the parameter $\nu$ (such that $\Xv$ is i.i.d.~on $\Bernoulli\big(\frac{\nu}{k}\big)$) behaves as $\Theta(1)$, since the regimes $\nu \to 0$ and $\nu \to \infty$ are easily shown to be strictly suboptimal regardless of the decoding rule.

{\bf Preliminary calculations.} We have the following under the symmetric noise model \eqref{eq:gt_symm_model}:
\begin{align}
    \PP[Y=0] &= (1-\rho)\Big(1-\frac{\nu}{k}\Big)^k + \rho\bigg(1 - \Big(1-\frac{\nu}{k}\Big)^k\bigg) \label{eq:py_1} \\
        &= \big((1-\rho)e^{-\nu} + \rho(1-e^{-\nu})\big) (1+o(1)). \label{eq:py_2}
\end{align}
In addition, when item $1$ is defective, we have
\begin{align}
&\PP[Y=0 | X_1 = 0] \nonumber \\
    &= (1-\rho)\Big(1-\frac{\nu}{k}\Big)^{k-1}  + \rho\bigg(1 - \Big(1-\frac{\nu}{k}\Big)^{k-1}\bigg) \label{eq:py_4} \\
    &= (1-\rho)\Big(1-\frac{\nu}{k}\Big)^k \cdot \frac{1}{1 - \frac{\nu}{k}} \nonumber \\
        &\qquad + \rho\bigg(1 - \Big(1-\frac{\nu}{k}\Big)^k \cdot \frac{1}{1 - \frac{\nu}{k}}\bigg) \\
    &= (1-\rho)\Big(1-\frac{\nu}{k}\Big)^k \cdot \Big( 1 + \frac{\nu}{k} + O\Big( \frac{1}{k^2} \Big)  \Big) \nonumber \\
        &\qquad + \rho\bigg(1 - \Big(1-\frac{\nu}{k}\Big)^k \cdot \Big( 1 + \frac{\nu}{k} + O\Big( \frac{1}{k^2} \Big)  \Big) \bigg) \label{eq:zeta_def0} \\
    &= \underbrace{\PP[Y=0]}_{=: \zeta} + \underbrace{\frac{\nu}{k} \bigg( (1-2\rho)\Big(1 - \frac{\nu}{k} \Big)^k  \bigg) + O\Big( \frac{1}{k^2} \Big)}_{=: \Delta}, \label{eq:zeta_def}
\end{align}
where \eqref{eq:zeta_def0} follows from a Taylor expansion of $ \frac{1}{1 - \frac{\nu}{k}}$, and \eqref{eq:zeta_def} follows by substituting \eqref{eq:py_1}.

{\bf Characterizing $I_1$.} Using the above definitions of $\zeta$ and $\Delta$, we simplify the mutual information $I_1$ in \eqref{eq:I1_intro} as follows:
\begin{align}
    I_1 &= H(Y) - H(Y|X_1) \\
        &= H_2(\zeta) - \frac{\nu}{k} H(Y|X_1 = 1) - \Big(1 - \frac{\nu}{k}\Big) H(Y|X_1 = 0) \\
        &=  H_2(\zeta) - \frac{\nu}{k} H_2(\rho) - \Big(1 - \frac{\nu}{k}\Big) H_2\big( \zeta + \Delta \big), \label{eq:opt_nu_3}
\end{align}
where \eqref{eq:opt_nu_3} follows since given $X_1 = 1$, the only uncertainty in $Y$ is the additive noise, whereas given $X_1 = 0$, the probability that $Y=0$ is $\zeta + \Delta$. 

The derivative of $H_2(\alpha)$ is $\log\frac{1-\alpha}{\alpha}$, and hence, applying a first-order Taylor expansion in \eqref{eq:opt_nu_3} yields
\begin{equation}
    I_1 = \frac{\nu}{k} \big( H_2(\zeta) - H_2(\rho) \big) - \Big(1 - \frac{\nu}{k}\Big) \Delta \log\frac{1-\zeta}{\zeta} + O(\Delta^2). \label{eq:I1_symm_init}
\end{equation}

{\bf Special cases $\nu = \log 2$ and $\nu = \nusymm$.} Setting $\nu = \log 2$, we obtain from \eqref{eq:py_2} that $\zeta \to \frac{1}{2}$.  Combined with the fact that $\Delta = O\big(\frac{1}{k}\big)$, we deduce from \eqref{eq:I1_symm_init} that
\begin{equation}
I_1 = \frac{\log 2}{k} \big( \log 2 - H_2(\rho) \big)(1+o(1)). \label{eq:I1_ub}
\end{equation}
Here we have used $H_2\big(\frac{1}{2}\big) = \log 2$ and the continuity of entropy.  Since $\nusymm \to \log 2$ ({\em cf.}, \eqref{eq:nu2}), we deduce that \eqref{eq:I1_ub} also holds when $\nu = \nusymm$.

While \eqref{eq:I1_ub} can be used to establish the optimality of separate decoding of inputs to within a factor of $\log 2$ in certain cases, it turns out that we can in fact do better via different choices of $\nu$.

{\bf General $\nu$.} Observing from \eqref{eq:py_1} and \eqref{eq:zeta_def} that $\Delta = \frac{\nu}{k}\big(\zeta - \rho\big) + o\big(\frac{1}{k}\big)$, we deduce from \eqref{eq:I1_symm_init} that
\begin{align}
    I_1 &= \frac{\nu}{k} \big( H_2(\zeta) - H_2(\rho) \big) - \frac{\nu}{k} (\zeta - \rho) \log\frac{1-\zeta}{\zeta} + o\Big(\frac{1}{k}\Big) \label{eq:I1_symm_init2} \\
        &= \underbrace{\frac{\nu}{k} \bigg( H_2(\zeta) - (\zeta - \rho) \log\frac{1-\zeta}{\zeta} - H_2(\rho)\bigg)}_{=: I'_1} + o\Big(\frac{1}{k}\Big). \label{eq:I1_leading}
\end{align}
We proceed by simplifying the leading term $I'_1$ by substituting the definition of the binary entropy function:
\begin{align}
    I'_1 &= \frac{\nu}{k}\bigg( -\zeta\log\zeta - (1-\zeta)\log(1-\zeta) \nonumber \\
            &\qquad  - (\zeta-\rho)\log(1-\zeta) + (\zeta-\rho)\log\zeta - H_2(\rho) \bigg) \\
        &= \frac{\nu}{k}\bigg( -(1-\rho)\log(1-\zeta) - \rho \log\zeta \nonumber \\
                &\qquad  + (1-\rho)\log(1-\rho) + \rho\log\rho \bigg) \\
        &= \frac{\nu}{k} D_2(\rho \| \zeta), \label{eq:I1_D2}
\end{align}
where $D_2(a\|b) = a\log\frac{a}{b} + (1-a)\log\frac{1-a}{1-b}$ is the binary KL divergence function.

Combining \eqref{eq:I1_D2} with \eqref{eq:py_2} and \eqref{eq:I1_leading}, we obtain
\begin{equation}
    I_1 = \bigg(\frac{\nu}{k} D_2(\rho \| \rho \star e^{-\nu})\bigg) (1+o(1)), \label{eq:I1_symm_final}
\end{equation}
where $a \star b = a(1-b) + b(1-a)$.  A simple numerical computation reveals that when $\rho > 0$ (e.g., $\rho = 0.11$), \eqref{eq:I1_symm_final} may not be optimized by $\nu = \log 2$; however, we found the optimality gap to be small, and we therefore focused primarily on the latter choice.

\section*{Acknowledgment}

The authors would like to thank Dmitry Malioutov for sharing his code from \cite{Mal12}, and Mikhail Malyutov for pointing us to \cite{GrosuThesis,Mal15}.

This work was supported by the European Research Council (ERC) under the
European Union's Horizon 2020 research and innovation programme (grant agreement
725594 -- time-data), and by an NUS startup grant.

 \bibliographystyle{IEEEtran}
 \bibliography{../JS_References}
 
\end{document}